\newenvironment{proof}{{\indent \indent \bf Proof.}}{\hfill $\blacksquare$\par}
\newtheorem{theorem}{Theorem}
\newtheorem{lemma}{Lemma}
\newtheorem{remark}{Remark}
\newtheorem{definition}{Definition}
\newtheorem{assumption}{Assumption}
\newtheorem{example}{Example}
\newtheorem{step}{Step}
\begin{document}

\begin{frontmatter}

\title{Pattern formation using an intrinsic optimal control approach\thanksref{footnoteinfo}} 

\thanks[footnoteinfo]{This work was supported by Natural Science Foundation
of China under Grant T2293772, and the Strategic Priority Research
Program of Chinese Academy of Sciences under Grant No.
XDA27000000. This paper was not presented at any IFAC meeting. Corresponding author: Zhixin Liu.}

\author[amss]{Tianhao Li}\ead{litianhao@amss.ac.cn},    
\author[ntu]{Yibei Li}\ead{yibei.li@ntu.edu.sg}, 
\author[amss]{Zhixin Liu}\ead{lzx@amss.ac.cn},    
\author[kth]{Xiaoming Hu}\ead{hu@kth.se}  

\address[amss]{Key Laboratory of Systems and Control, Academy of Mathematics and Systems Science,
        Chinese Academy of Sciences, and School of Mathematical Sciences, University of Chinese Academy of Sciences, Beijing 100190, P.~R.~China}  

\address[ntu]{School of Electrical and Electronic Engineering, Nanyang Technological University, Singapore 639798, Singapore}

\address[kth]{Department of Mathematics, KTH Royal Institute of Technology, Stockholm 10044, Sweden}             

\begin{keyword}                           
   formation control ; leader-follower ; linear quadratic optimal control; distributed observer.
\end{keyword}                             

\begin{abstract}                          
This paper investigates a pattern formation control problem for a multi-agent system modeled with given interaction topology, in which $m$ of the $n$ agents are chosen as leaders and consequently a control signal is added to each of the leaders. These agents interact with each other by Laplacian dynamics on a graph. The pattern formation control problem is formulated as an intrinsic infinite time-horizon linear quadratic optimal control problem, namely, no error information is incorporated in the objective function. Under mild conditions, we show the existence of the optimal control strategy and the convergence to the desired pattern formation. Based on the optimal control strategy, we propose a distributed control strategy to achieve the given pattern. Finally, numerical simulation is given to illustrate theoretical results.
\end{abstract}

\end{frontmatter}

\section{Introduction}
Multi-agent coordination has been an increasingly important modeling method in recent years(cf. \cite{Cao2013,Chen2021}). Formation control problem described in multi-agent coordination which aims to achieve various formation patterns by designing distributed control strategy attracts more and more researchers’ attention (cf. \cite{OH2015}). 

In most existing literatures on the formation control problem, researchers consider the simplest multi-agent systems with agents of single or double integrators \cite{Li2022, Chen2020, Oh2018, de2021, Chen2022, Du2013}. However, in practice, agent systems can rarely be described as simple integrators. Thus, some researchers consider the formation control problem on multi-agent systems with more general and practical agent systems such as general linear systems \cite{Dong2016} and unicycle system \cite{Zhao2018,Kwon2022}. Interactions between agents are essential for multi-agent systems and cause additional difficulty for the control design compared to single agent systems. Nevertheless, as far as we know, in most existing research on the formation control problem, agent interaction only arises in the designed control. Rare existing research consider the formation control problem on multi-agent systems with inherent agent interaction outside the designed control. Naturally, rare literature considers the formation control problem on multi-agent systems when only some agents can be controlled, though this situation is common in reality. 

In this paper, we consider a pattern formation control problem for a multi-agent system with inherent agent interaction in which only some agents are chosen to be leaders, i.e., to be controlled directly. It is noteworthy that the leader-follower framework in this paper is different from that in most existing literature \cite{Tang2021} on the formation control problem, where the leader dynamics defines the reference to which the followers are controlled directly to maintain certain relative formations. In this paper, in contrast, we only design control strategies for the few leader agents while the entire system is driven to the desired pattern with the help of inherent agent interaction among the followers and the leaders. Now, the key problem is how to design the control strategy to drive the whole system state to the desired pattern by only controlling the few leaders. To achieve this, we formulate the pattern formation control problem as
an intrinsic infinite time-horizon linear quadratic optimal control problem, namely, no error information is incorporated in
the objective function. It is well known that the optimal trajectory of regular linear quadratic optimal control problems converges to a fixed point. However, in our problem, the optimal trajectory is expected to converge to a pattern instead of a single point. To overcome this difficulty, we restrict the feasible control set by adding integrators to the system. The optimal control strategy in the reduced feasible control set can drive the system state to the pattern.  By this way, we design a control strategy for the few leaders and  the desired pattern can be achieved with the help of the interaction among the leaders and followers. 

It is clear that to achieve a given pattern, state information of every agent must be used in at least one of the designed controllers and should be known by at least one of the controlled agents. This is natural when all the agents can be controlled. However, when only some agents can be controlled, for agents which are not neighbors of any controlled agents, their state information is not known by any controlled agents following the idea of distributed control. To deal with this difficulty, state information of all the agents are estimated at every controlled agents locally by introducing distributed observers. About distributed observers, see \cite{Yang2022,Han2019,Mitra2018,Park2017,Mitra2017}.  Compared to these well-studied distributed observer methods which aim to obtain estimation of states, we focus more on applying the estimation information obtained by distributed observers to design a distributed control strategy. This idea derives from the classical separation principle in the linear control system. However, for the case of multiagent systems where distributed control strategies are used, classical separation principle fails since the observer at a leader agent can not use the control information of other leader agents except for its neighbors. This results in the observers and controllers must be designed simultaneously for the distributed case. The contribution of this paper is as follows. For Laplacian dynamic systems, under mild conditions, we first design a centralized control strategy to achieve a given pattern and prove the convergence of this control strategy. Then, using method of distributed observer to estimate full states at each controlled agent, we design a distributed control strategy to achieve a given pattern based on the above centralized control strategy. The convergence of distributed control strategy is also proved theoretically. The combination of centralized control strategy and distributed observers indicates a general way for designing distributed control strategy. Finally, simulations are given to illustrate the theoretical results.

The remainder of this paper is organized as follows. In Section \ref{section2}, we first introduce relevant background material, and then introduce the definition of patterns and the formation control problem in this paper. In Section \ref{section3}, optimal control strategies are proposed and analyzed. In Section \ref{section4}, distributed control strategies are proposed and analyzed. In Section \ref{section5}, simulations are given to illustrate the theoretical results. 

\section{Problem formulation}\label{section2}

A graph is denoted as $\mathcal{G}=(\mathcal{V},\mathcal{E})$, where $\mathcal{V}=\{1,2,\cdots,n\}$ is the vertex set and $\mathcal{E}\subset \mathcal{V}\times \mathcal{V}$ is the edge set. The set of neighbors of agent $i$ is denoted by $\mathcal{N}_{i}=\{j\in \mathcal{V}| (j,i)\in \mathcal{E}\}$. Denote the adjacency matrix  of $\mathcal{G}$ as $A$, degree matrix as $D$, and the corresponding  Laplacian matrix as $L$. All the graphs mentioned in the following are undirected graphs. Let $\mathcal{V}_{I}$ be a subset of $\mathcal{V}$. An induced subgraph of $\mathcal{G}$ with respect to the vertex set $\mathcal{V}_{I}$ is denoted as $\mathcal{G}_{I}=(\mathcal{V}_{I},\mathcal{E}_{I})$, where $\mathcal{E}_{I} = \{(j,i)\in \mathcal{V}_{I}\times \mathcal{V}_{I}|(j,i)\in \mathcal{E}\}$.

Consider a multi-agent system on a graph $\mathcal{G}=(\mathcal{V},\mathcal{E})$. The dynamic of the agent $i$ is described by (cf. \cite{Wang2016})
\begin{equation}\label{eq34}
    \dot{x}_{i}(t) = \sum_{j\in \mathcal{N}_{i}}(x_{j}(t)-x_{i}(t)) + a x_{i}(t), i\in \mathcal{V},
\end{equation}
where $x_{i}(t)\in \mathbf{R}$ is the state of the agent $i$, $a$ is a constant and $\mathbf{R}$ denotes the set of real numbers. We see that the self-organization behavior of the system (\ref{eq34}), which is determined by the constant $a$ and the Laplacian matrix $L$ of the graph  $\mathcal{G}$, may not be what we expect. In order to generate desired behaviors, we choose some agents $i_{1},\cdots,i_{m}$ to add control. These chosen agents are called leaders whose dynamics are described by

\begin{equation}\label{eq33}
    \dot{x}_{i}(t) = \sum_{j\in \mathcal{N}_{i}}(x_{j}(t)-x_{i}(t)) + a x_{i}(t) + u_{i}(t),
\end{equation}
where $u_{i}(t)$ is the control added to the leader $i$. The other agents are called followers whose dynamics are still described by (\ref{eq34}). The initial condition of system (\ref{eq34}) and (\ref{eq33}) are denoted as $x_{i}(0)$. 

Denote the set of leader agents as $\mathcal{V}_{l}\triangleq \{i_{1},\cdots,i_{m}\}$ and the set of follower agents as $\mathcal{V}_{f}\triangleq \mathcal{V}-\mathcal{V}_{l}$. The induced subgraph of $\mathcal{G}$ with respect to the vertex set $\mathcal{V}_{l}$ is denoted as $\mathcal{G}_{l}=(\mathcal{V}_{l},\mathcal{E}_{l})$. The set of neighbors of leader node $i_{j}$ in $\mathcal{G}_{l}$ is denoted by $\mathcal{N}^{(l)}_{i_{j}}$. The Laplacian matrix of $\mathcal{G}_{l}$ is denoted as $L_{1}$.



It is clear that the dynamical behavior of followers are influenced by leaders via the graph $\mathcal{G}$. We are interested in how to control the leader systems (\ref{eq33}) such that the state $x(t)=[x_{1}(t),\cdots,x_{n}(t)]^{T}$ of all agents converges to a desired pattern by using only local information. The control strategy is related to form of the desired pattern. Thus, to describe this problem clearly, we next propose a definition about patterns on a graph. 

\begin{definition}\label{definition1}
Given a graph $\mathcal{G}=(\mathcal{V},\mathcal{E})$ and a vector $\alpha = [\alpha_{1},\cdots,\alpha_{n}]$ $(\alpha_{i}=\pm 1,1\leq i\leq n)$, we define a pattern on the graph $\mathcal{G}$ as a set $\mathcal{SP}(\alpha) \triangleq \{x| x=p\alpha,\ p\in \mathbf{R},\  \vert p\vert \geq p_{0} \}$, where $p_{0}$ is a positive constant and the vector $\alpha$ determines the form of patterns.\end{definition}

We note that for practical scenarios, it is hard to measure the pigment concentrations of a pattern (corresponding to $p$ in the definition of $\mathcal{SP}(\alpha)$). Thus, we define the pattern in  Definition \ref{definition1} as a set rather than a point to reflect such a property. 

The following example shows two patterns when the vector $\alpha$ is chosen as two different values.

\begin{example}\label{example1} 
    Take the graph $\mathcal{G}=(\mathcal{V},\mathcal{E})$ as a $7\times 7$ grid graph (see \cite{Notarstefano2013} for grid graphs) and label vertexes of the grid graph from top to bottom and from left to right. Denote $\beta_{1}=[1,-1,1,-1,1,-1,1]^{T}$ and $\beta_{2}=[1,1,1,1,1,1,1]^{T}$. If we choose $\alpha = \hat{\alpha} = \beta_{1} \otimes \beta_{1}$ and the constant $p$ satisfying $\lvert p\rvert \geq p_{0}$, where $\otimes$ denotes the Kronecker product, then Fig.~\ref{fig1_1} shows one element in pattern  $\mathcal{SP}(\hat{\alpha})$, where black squares represent $p$ and white squares represent $-p$. Correspondingly, if the vector $\alpha$  and the constant $p$ are chosen according to  $\bar{\alpha} = \beta_{1} \otimes \beta_{2}$ and  $\lvert p\rvert \geq p_{0}$, then one element in pattern $\mathcal{SP}(\bar{\alpha})$ can be shown in Fig.~\ref{fig1_2}.
\end{example}


\begin{figure}
\begin{center}
\subfigure[Pattern in $\mathcal{SP}_{1}$.]{
    \label{fig1_1}
    \includegraphics[width=0.40\linewidth]{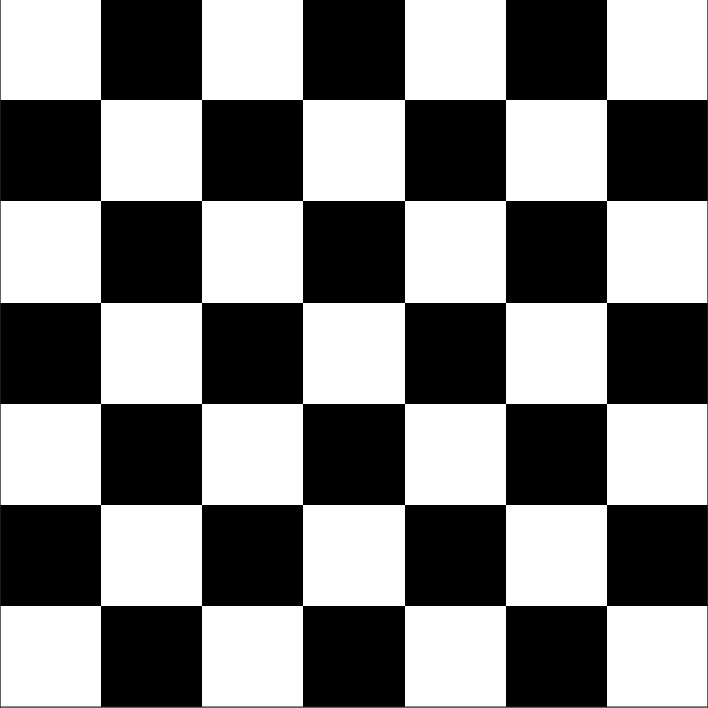}}
\subfigure[Pattern in $\mathcal{SP}_{2}$.]{
    \label{fig1_2}
    \includegraphics[width=0.40\linewidth]{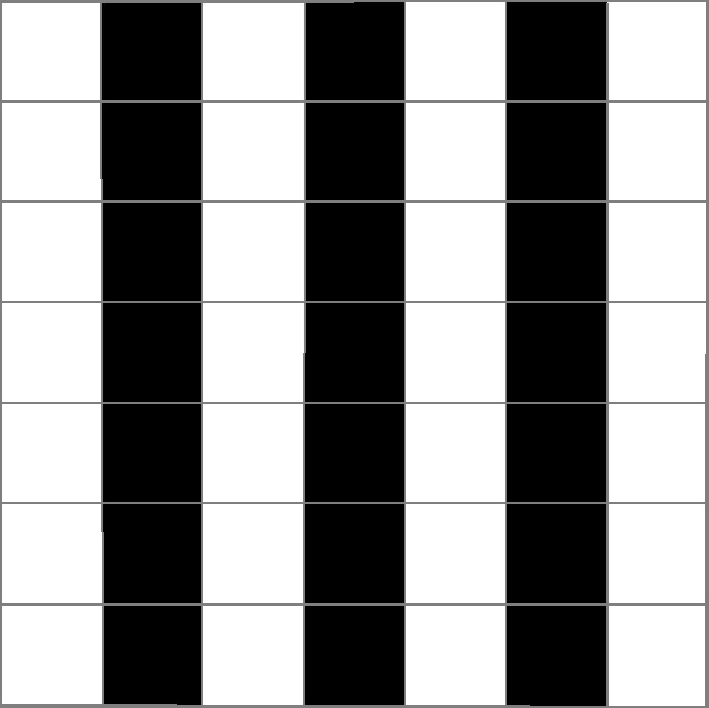}}
\caption{The figures of patterns in $\mathcal{S}_{1}$ and $\mathcal{S}_{2}$.}
\label{fig1}
\end{center}                                 
\end{figure}



In this paper, we focus on designing an optimal control strategy for the leaders so that the state $x(t)$ converges to a given pattern $\mathcal{SP}(\alpha)$. According to Definition \ref{definition1}, we see that the pattern is described by a set $\mathcal{SP}(\alpha)$ rather than a fixed point or a given trajectory, thus the commonly used tracking method (cf., \cite{Feng2024}) may not be applicable to solve this problem. The optimal control method may be an alternative approach to designing the control strategy by designing a performance index related to the desired patterns. In nature, the optimal control strategy would require all state information, which, however, in many practical applications would be difficult to obtain. Therefore, a distributed manner is needed to implement the designed controller.
By using distributed observers, every leader agent can estimate full state of the whole system. Then, replacing the state in centralized control strategies by the estimations obtained from distributed observers the distributed control strategies can be designed. Based on this, we will deal with the problem by the following two steps.

\begin{step}\label{step1}
Design a control strategy so that the state $x(t)$ converges to a given pattern $\mathcal{SP}(\alpha)$ by optimal control method.
\end{step}

\begin{step}\label{step2}
Based on the optimal control strategy in Step \ref{step1}, design a distributed control strategy by using a distributed observer.
\end{step}
In the following two sections, we will design the control strategies and analyze the dynamical behavior of the whole system in the above two steps.

\section{The optimal control strategy}\label{section3}

We rewrite the dynamical systems (\ref{eq34}) and (\ref{eq33}) as the following compact form,
\begin{equation}\label{eq1}
    \dot{x}=-Lx+ax+Bu,
\end{equation}
where $x(t)\in \mathbf{R}^n$ is the state, $u(t)\in \mathbf{R}^{m}$ is the control to be designed, $L$ is the Laplacian matrix of the graph $\mathcal{G}$, $B=[\varepsilon^{(n)}_{i_1},\cdots,\varepsilon^{(n)}_{i_m}]$ is the control matrix, $a$ is a constant and $\varepsilon^{(n)}_{j}$ denotes the $j$th column of the $n\times n$ unit matrix. The initial condition is $x(0)=[x_{1}(0),\cdots,x_{n}(0)]^{T}$.

For Step \ref{step1}, we aim to find a control strategy to drive the state $x(t)$ to the desired pattern $\mathcal{SP}(\alpha)$ by the optimal control method. Normally,  the optimal control problem would be formulated as follows, 
\begin{equation}\nonumber
    \begin{aligned}
        \min_{u} \quad & \bar{J} = \frac{1}{2}\int_{0}^{\infty}( u^{T}Ru + x^{T}Qx)dt,\\
        s.t. \quad &\dot{x}=-Lx+ax+Bu, 
    \end{aligned}
\end{equation}
where positive definite matrix $R$ and positive semi-definite matrix $Q$ are to be designed. However, if we use the linear quadratic optimal control which minimizes the index $\bar{J}$ directly, then the state of (\ref{eq1}) can only be driven to eigenvectors of the Laplacian matrix $L$, where patterns such as stripes we are interested in may not be included. To generate a pattern stably, $\dot{x}(t)$ should converge to zero, and $x(t)$ should converge to a fix point $x_{\infty}$. By the form of the index $\bar{J}$, we know that $u(t)$ must converge to zero. By (\ref{eq1}), it is clear that $x_{\infty}$ is an eigenvector of the the Laplacian matrix $L$. However, taking the stripe pattern defined in Example \ref{example1} and shown in the Fig.~\ref{fig1_2} as an example, by direct calculation, it can be easily verified that the vector $\alpha$ corresponding to the stripe pattern is not an eigenvector of $L$, the Laplacian matrix of the $7\times 7$ grid graph. In a word, some patterns we are interested in such as stripes  can not be generated by controls which converge to zero, such as the control obtained by  directly minimizing the index $\bar{J}$. Thus,
to achieve more desired patterns, we should allow the control $u(t)$ to converge to a nonzero value. In this work, we consider only the case where the value is constant. Consequently, a way to achieve this goal is to add an integrator to the system. 

We propose the following control strategy with integrators to drive the state of the system (\ref{eq1}) to a pattern $\mathcal{SP}(\alpha)$,
\begin{equation}\label{eq2}
    u = z,\ \dot{z} = v,
\end{equation}
where the new control $v$ is designed by minimizing the following performance index,
\begin{equation}\label{eq3}
    J = \frac{1}{2}\int_{0}^{\infty}(\Vert v\Vert^2 + x^{T}Qx)dt,
\end{equation}
and $Q$ is a positive semi-definite matrix to be designed and determines the  convergence point of the system state. Note that in the index (\ref{eq3}), we choose the weight matrix of control as a unit matrix for simplicity.

To design the matrix $Q$ so that the optimal trajectory of the optimal control problem (\ref{eq1})-(\ref{eq3}) converges to the pattern $\mathcal{SP}(\alpha)$, we first transform the pattern $\mathcal{SP}(\alpha)$ to an equivalent form.

Divide the edge set $\mathcal{E}$ in the graph $\mathcal{G}$ into two parts $\mathcal{E}_{1}(\alpha)$ and $\mathcal{E}_{2}(\alpha)$, such that $\mathcal{E}_{1}(\alpha)=\{(i,j)\in \mathcal{E}| \alpha_{i}=-\alpha_{j}\}$ and $\mathcal{E}_{2}(\alpha)=\{(i,j)\in \mathcal{E}| \alpha_{i}=\alpha_{j}\}$. For $k=1,2$, denote $\mathcal{A}^{(k)}(\alpha)=(a^{(k)}_{ij})_{n\times n}$, where $a^{(k)}_{ij}=1$ if and only if $(i,j) \in \mathcal{E}_{k}(\alpha)$ and otherwise $a^{(k)}_{ij}=0$. Denote $\mathcal{N}^{(k)}_{i}(\alpha)=\{j\in \mathcal{V}| (j,i)\in \mathcal{E}_{k}(\alpha)\}$ and $\mathcal{S}(\alpha) = \{x\in \mathbf{R}^{n}| (D+\mathcal{A}^{(1)}(\alpha)-\mathcal{A}^{(2)}(\alpha))x=0,\ \Vert x\Vert \geq p_{0}\sqrt{n} \}$, where $D$ is the degree matrix of the graph $\mathcal{G}$.

To illustrate the relation between $\mathcal{SP}(\alpha)$ and $\mathcal{S}(\alpha)$, an assumption is introduced in the following.

\begin{assumption}\label{assumption1}
    The graph $\mathcal{G}=(\mathcal{V},\mathcal{E})$ is connected.
\end{assumption}

Under the above assumption, we have the following result.

\begin{lemma}\label{lemma1}
    Under Assumption \ref{assumption1}, we have $\mathcal{SP}(\alpha)=\mathcal{S}(\alpha)$. 
\end{lemma}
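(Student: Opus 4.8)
The plan is to prove the set equality $\mathcal{SP}(\alpha)=\mathcal{S}(\alpha)$ by double inclusion, exploiting the fact that the matrix $M(\alpha)\triangleq D+\mathcal{A}^{(1)}(\alpha)-\mathcal{A}^{(2)}(\alpha)$ is a \emph{signed Laplacian} that can be written as $M(\alpha)=\Lambda_{\alpha}L\Lambda_{\alpha}$, where $\Lambda_{\alpha}=\mathrm{diag}(\alpha_{1},\dots,\alpha_{n})$ is the diagonal $\pm1$ matrix associated to $\alpha$. First I would verify this factorization entrywise: since $\alpha_i^2=1$, the $(i,i)$ entry of $\Lambda_\alpha L \Lambda_\alpha$ is $L_{ii}=d_i$, matching the degree matrix $D$; and for $i\neq j$ with $(i,j)\in\mathcal{E}$, the $(i,j)$ entry is $\alpha_i\alpha_j L_{ij}=-\alpha_i\alpha_j$, which equals $+1=a^{(1)}_{ij}$ when $\alpha_i=-\alpha_j$ and $-1=-a^{(2)}_{ij}$ when $\alpha_i=\alpha_j$, exactly reproducing $\mathcal{A}^{(1)}(\alpha)-\mathcal{A}^{(2)}(\alpha)$ off the diagonal. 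Hence $M(\alpha)x=0$ if and only if $L(\Lambda_\alpha x)=0$.

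Given the factorization, the argument is short. For the inclusion $\mathcal{SP}(\alpha)\subseteq\mathcal{S}(\alpha)$: take $x=p\alpha$ with $|p|\geq p_0$. Then $\Lambda_\alpha x = p\,\Lambda_\alpha\alpha = p\mathbf{1}$, and since $L\mathbf{1}=0$ always, we get $M(\alpha)x=\Lambda_\alpha L(p\mathbf{1})=0$; moreover $\|x\|=|p|\,\|\alpha\|=|p|\sqrt{n}\geq p_0\sqrt{n}$, so $x\in\mathcal{S}(\alpha)$. For the reverse inclusion $\mathcal{S}(\alpha)\subseteq\mathcal{SP}(\alpha)$: take $x$ with $M(\alpha)x=0$ and $\|x\|\geq p_0\sqrt{n}$. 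Then $L(\Lambda_\alpha x)=0$, and here is where Assumption~\ref{assumption1} enters: for a connected graph the kernel of the Laplacian $L$ is exactly $\mathrm{span}\{\mathbf{1}\}$, so $\Lambda_\alpha x = p\mathbf{1}$ for some scalar $p\in\mathbf{R}$, i.e. $x=p\,\Lambda_\alpha^{-1}\mathbf{1}=p\,\Lambda_\alpha\mathbf{1}=p\alpha$. Finally $\|x\|=|p|\sqrt{n}\geq p_0\sqrt{n}$ forces $|p|\geq p_0$, so $x\in\mathcal{SP}(\alpha)$.

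The only genuinely substantive ingredient is the kernel characterization $\ker L=\mathrm{span}\{\mathbf{1}\}$ for a connected undirected graph, which is a standard fact from algebraic graph theory and may simply be cited; everything else is bookkeeping with the $\pm1$ diagonal conjugation. I would expect the main (very mild) obstacle to be presenting the entrywise check of $M(\alpha)=\Lambda_\alpha L\Lambda_\alpha$ cleanly — in particular being careful that the partition $\mathcal{E}=\mathcal{E}_1(\alpha)\cup\mathcal{E}_2(\alpha)$ is exhaustive and disjoint over the actual edges, so that the off-diagonal pattern of $\mathcal{A}^{(1)}(\alpha)-\mathcal{A}^{(2)}(\alpha)$ genuinely coincides with $-\Lambda_\alpha(\text{off-diagonal of }L)\Lambda_\alpha$ and not merely on a subset. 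Once that identity is in hand, the lemma follows immediately from the two inclusions above.
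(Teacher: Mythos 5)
Your proof is correct, but it takes a genuinely different route from the paper's. The paper first rewrites $\mathcal{S}(\alpha)$ entrywise as the set of $x$ with $x_i=-x_j$ on $\mathcal{E}_1(\alpha)$ and $x_i=x_j$ on $\mathcal{E}_2(\alpha)$, and then proves the nontrivial inclusion by a direct combinatorial argument: connectivity gives a path from node $1$ to any node $j$, and chaining the relation $x_i=\frac{\alpha_i}{\alpha_j}x_j$ along its edges yields $x=p\alpha$ with $p=x_1/\alpha_1$. You instead observe the algebraic identity $D+\mathcal{A}^{(1)}(\alpha)-\mathcal{A}^{(2)}(\alpha)=\Lambda_\alpha L\Lambda_\alpha$ with $\Lambda_\alpha=\mathrm{diag}(\alpha)$, reducing everything to the standard fact $\ker L=\mathrm{span}\{\mathbf{1}\}$ for a connected graph; your entrywise verification of the factorization checks out (the sign bookkeeping on $-\alpha_i\alpha_j$ is right, and the partition of $\mathcal{E}$ into $\mathcal{E}_1(\alpha)\cup\mathcal{E}_2(\alpha)$ is indeed exhaustive and disjoint). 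The two arguments are of course cousins --- the kernel characterization of $L$ is itself usually proved by the same path-chaining --- but your conjugation identity is a structural observation the paper never makes, and it buys something beyond this lemma: it immediately exhibits $Q$ as positive semi-definite with a one-dimensional kernel spanned by $\alpha$, a fact the paper needs again (and re-derives less transparently) in the proof of its Lemma~3. The paper's version is marginally more self-contained since it cites nothing; yours is shorter once the factorization is established and explains \emph{why} the chosen $Q$ works.
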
 
\begin{proof}
 According to the definition of $\mathcal{S}(\alpha)$, we see that it can be expressed into the following form equivalently,
\begin{equation}\label{eq31}
\begin{aligned}
    \mathcal{S}(\alpha)=& \{x| x_{i}=-x_{j}, for\  (i,j)\in \mathcal{E}_{1}(\alpha)\ and \\
          & x_{i}=x_{j}, for\  (i,j)\in \mathcal{E}_{2}(\alpha),\ \Vert x\Vert \geq p_{0}\sqrt{n} \}.
\end{aligned}
\end{equation}
By the definition of $\mathcal{E}_{1}(\alpha)$ and $\mathcal{E}_{2}(\alpha)$, it is clear that $\mathcal{SP}(\alpha) \subseteq \mathcal{S}(\alpha)$. We next prove $\mathcal{S}(\alpha) \subseteq \mathcal{SP}(\alpha)$. By Assumption \ref{assumption1}, there exists a path between the node $1$ and any node $j$ in the graph $\mathcal{G}$. Denote nodes on the path as $1,k_{2},\cdots,k_{r-1},j$. By (\ref{eq31}) and the definition of $\mathcal{E}_{1}(\alpha)$ and $\mathcal{E}_{2}(\alpha)$, we have 
\begin{equation}\label{eq32} 
    x_{i}=\frac{\alpha_{i}}{\alpha_{j}}x_{j}
\end{equation}
for any $x = [x_{1},\cdots,x_{n}]^{T}\in \mathcal{S}(\alpha)$ and $(i,j)\in \mathcal{E}$. Thus, we have by (\ref{eq32})
\begin{equation}\nonumber
    x_{j} = \frac{\alpha_{j}}{\alpha_{k_{r-1}}}x_{k_{r-1}}=\cdots=\frac{\alpha_{j}}{\alpha_{1}}x_{1}
\end{equation}
for $1\leq j\leq n$, which indicates $x = p\alpha \in \mathcal{SP}(\alpha)$, where $p=\frac{x_{1}}{\alpha_{1}}$. As $x\in \mathcal{S}(\alpha)$, we have $\lvert p\rvert =\frac{\Vert x \Vert}{\Vert \alpha \Vert}\geq p_{0}$. Thus, $x\in \mathcal{SP}(\alpha)$, which deduces $\mathcal{S}(\alpha) \subseteq \mathcal{SP}(\alpha)$. This completes the proof. 
\end{proof}

Lemma \ref{lemma1} indicates that if the graph $\mathcal{G}$ is connected, then a pattern $\mathcal{SP}(\alpha)$ can also be described by the set $\mathcal{S}(\alpha)$. In the following, we also call the set $\mathcal{S}(\alpha)$ a pattern and our objective is transformed to design a strategy to drive the state of the system to the set $\mathcal{S}(\alpha)$. 

By the form of the set $\mathcal{S}(\alpha)$ which is equivalent to $\mathcal{SP}(\alpha)$, we can choose the positive semi-definite matrix $Q$ in (\ref{eq3}) as 
\begin{equation}
    \label{eqmatrixQ}
    Q = D+\mathcal{A}^{(1)}(\alpha)-\mathcal{A}^{(2)}(\alpha).
\end{equation}


The above optimization problem (\ref{eq1})-(\ref{eq3}) is summarized to the following linear quadratic optimal control problem, 
\begin{equation}\label{eq4}
    \begin{aligned}
        \min_{v} \quad &J = \frac{1}{2}\int_{0}^{\infty}(\Vert v\Vert^2 + x^{T}\bar{Q}x)dt,\\
        s.t. \quad &\dot{\bar{x}} = \bar{A}\bar{x}+\bar{B}v, \\
        & \bar{x}(0)=\begin{bmatrix}
        x(0) \\ z(0)
    \end{bmatrix}
    \end{aligned}
\end{equation}
where 
\begin{equation}\label{eq5}
\begin{aligned}
    &\bar{x} = \begin{bmatrix}
        x \\ z
    \end{bmatrix},
    \bar{A} = \begin{bmatrix}
        -L+aI_{n} & B \\ 0_{m\times n} & 0_{m\times m}
    \end{bmatrix}, \\
    &\bar{B} = \begin{bmatrix}
             0_{m\times m} \\  I_{n}
    \end{bmatrix},
    \bar{Q} = \begin{bmatrix}
        Q & 0_{n\times m} \\ 0_{m\times n} & 0_{m\times m}
    \end{bmatrix},
\end{aligned}
\end{equation}
and $I_{n}$ denotes the $n\times n$ unit matrix.

To solve the optimal problem (\ref{eq4}) and guarantee that the optimal trajectory $x(t)$ in the system (\ref{eq1}) converges to the given pattern $\mathcal{S}(\alpha)$, we need the following assumptions.
\begin{assumption}\label{assumption2}
There exist $u^{*}\in \mathbf{R}^{m}$ and $x^{*}\in \mathcal{S}(\alpha)$, such that $(-L+aI_{n})x^{*}+Bu^{*}=0$.
\end{assumption}

\begin{remark}\label{remark2}
    Assumption \ref{assumption2} is necessary in the following sense: when the control objective is achieved, the state $x(t)$ will converge to a point in the pattern set $\mathcal{S}(\alpha)$ and $\dot{x}(t)$ will converge to zero. Thus, in the case where $u(t)$ converges to a vector as is considered in this work, Assumption \ref{assumption2} is established by (\ref{eq1}).
\end{remark}

\begin{assumption}\label{assumption3}
(L,B,Q) is controllable and  observable, where $Q$ is defined by (\ref{eqmatrixQ}).
\end{assumption}


We note that $(\bar{A},\bar{Q})$ defined by (\ref{eq4}) is not detectable, thus the classical results on linear quadratic optimal control problems may not be applicable to solve the optimization problem (\ref{eq4}). In the following, we introduce a lemma concerning the solution of  (\ref{eq4}) without requiring the detectability of  $(\bar{A},\bar{Q})$.

\begin{lemma}\label{lemma2}(\cite{Trentelman2001})
Suppose $(\bar{A},\bar{B})$ is controllable and $\bar{Q}$ is positive semi-definite, then there exists a smallest real symmetric positive semi-definite solution $P^{-}$ of the following algebric Riccati equation
\begin{equation}\label{eq7}
    \bar{A}^{T}P+PA-P\bar{B}\bar{B}^{T}P+\bar{Q} = 0,
\end{equation}
that is, for any real symmetric positive semi-definite solution $P$ of the algebraic  Riccati equation (\ref{eq7}), $P-P^{-}$ is positive semi-definite. Furthermore, 
\begin{equation}\label{eq42}
    v(t) = -\bar{B}^{T}P^{-}\bar{x}(t)
\end{equation}
is an optimal solution of the linear quadratic optimal control problem (\ref{eq4}).
\end{lemma}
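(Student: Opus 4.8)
The plan is to establish both assertions of Lemma \ref{lemma2} by the classical finite-horizon approximation argument, which sidesteps the lack of detectability of $(\bar{A},\bar{Q})$. For $T>0$ let $V_{T}(\bar{x}_{0})$ denote the minimal value of $\frac{1}{2}\int_{0}^{T}(\Vert v\Vert^{2}+\bar{x}^{T}\bar{Q}\bar{x})\,dt$ over all controls $v$ on $[0,T]$ with free right endpoint, where $\bar{x}_{0}=\bar{x}(0)$. Since $\bar{Q}\geq 0$ and the control $v\equiv 0$ already has finite cost, the Riccati differential equation $\dot{P}(t)=-\bar{A}^{T}P(t)-P(t)\bar{A}+P(t)\bar{B}\bar{B}^{T}P(t)-\bar{Q}$ with terminal value $P(T)=0$ has a global positive semi-definite solution on $[0,T]$, and $V_{T}(\bar{x}_{0})=\frac{1}{2}\bar{x}_{0}^{T}P_{T}\bar{x}_{0}$ with $P_{T}:=P(0)$.

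First I would show that the family $\{P_{T}\}_{T>0}$ is monotone and bounded. Monotonicity $P_{T_{1}}\leq P_{T_{2}}$ for $T_{1}\leq T_{2}$ holds because truncating any $T_{2}$-admissible control to $[0,T_{1}]$ only discards a non-negative amount of running cost, so $V_{T_{1}}(\bar{x}_{0})\leq V_{T_{2}}(\bar{x}_{0})$ for all $\bar{x}_{0}$. For a uniform upper bound I would invoke controllability of $(\bar{A},\bar{B})$: every $\bar{x}_{0}$ can be driven to the origin in finite time with finite control energy, after which $v\equiv 0$ keeps the state at the origin at zero cost; this exhibits an infinite-horizon admissible control of finite cost $c(\bar{x}_{0})$, whence $V_{T}(\bar{x}_{0})\leq c(\bar{x}_{0})$ for every $T$. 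A non-decreasing sequence of symmetric matrices bounded above in the quadratic-form order converges, so $P_{T}\uparrow P^{-}$ for some symmetric $P^{-}\geq 0$; letting $T\to\infty$ in the Riccati differential equation then shows that $P^{-}$ solves the algebraic Riccati equation (\ref{eq7}).

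Next I would obtain, in one stroke, minimality of $P^{-}$ and the optimal cost. The key tool is the completing-the-squares identity: for any solution $P=P^{T}$ of (\ref{eq7}) and any trajectory of $\dot{\bar{x}}=\bar{A}\bar{x}+\bar{B}v$,
\begin{equation}\nonumber
\Vert v\Vert^{2}+\bar{x}^{T}\bar{Q}\bar{x}=\Vert v+\bar{B}^{T}P\bar{x}\Vert^{2}-\frac{d}{dt}\bigl(\bar{x}^{T}P\bar{x}\bigr).
\end{equation}
If $P$ is moreover positive semi-definite, integrating this over $[0,T]$ and dropping the non-negative terms $\Vert v+\bar{B}^{T}P\bar{x}\Vert^{2}$ and $\bar{x}(T)^{T}P\bar{x}(T)$ gives $V_{T}(\bar{x}_{0})\leq\frac{1}{2}\bar{x}_{0}^{T}P\bar{x}_{0}$, so $P_{T}\leq P$ and, letting $T\to\infty$, $P^{-}\leq P$; thus $P^{-}$ is the smallest positive semi-definite solution. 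On the other hand, for any admissible control $v$, restricting to $[0,T]$ gives $J(v)\geq\frac{1}{2}\int_{0}^{T}(\Vert v\Vert^{2}+\bar{x}^{T}\bar{Q}\bar{x})\,dt\geq V_{T}(\bar{x}_{0})=\frac{1}{2}\bar{x}_{0}^{T}P_{T}\bar{x}_{0}$, and letting $T\to\infty$ yields $J(v)\geq\frac{1}{2}\bar{x}_{0}^{T}P^{-}\bar{x}_{0}$.

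Finally I would verify that the feedback $v^{*}=-\bar{B}^{T}P^{-}\bar{x}$ attains this lower bound. Substituting $P=P^{-}$ and $v=v^{*}$ into the identity above makes the square vanish, so $\frac{1}{2}\int_{0}^{T}(\Vert v^{*}\Vert^{2}+\bar{x}^{T}\bar{Q}\bar{x})\,dt=\frac{1}{2}\bar{x}_{0}^{T}P^{-}\bar{x}_{0}-\frac{1}{2}\bar{x}(T)^{T}P^{-}\bar{x}(T)\leq\frac{1}{2}\bar{x}_{0}^{T}P^{-}\bar{x}_{0}$; since the integrand is non-negative, the integral increases with $T$ to a limit $J(v^{*})\leq\frac{1}{2}\bar{x}_{0}^{T}P^{-}\bar{x}_{0}$, which combined with the lower bound gives $J(v^{*})=\frac{1}{2}\bar{x}_{0}^{T}P^{-}\bar{x}_{0}=\min_{v}J$. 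I expect the main obstacle to be the passage to the limit $P_{T}\uparrow P^{-}$: one must squeeze the uniform bound out of controllability and then argue that the limit matrix genuinely satisfies the algebraic equation rather than being a mere cluster point, since without detectability the familiar stabilizing-solution and Lyapunov shortcuts are unavailable. The optimality half is then comparatively routine, the only delicate point being that $\bar{x}(T)^{T}P^{-}\bar{x}(T)$ need not vanish as $T\to\infty$ but, being non-negative, causes no harm.
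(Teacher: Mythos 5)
The paper does not prove this lemma; it is quoted verbatim from the cited reference (Trentelman et al.), so there is no in-paper argument to compare against. Your finite-horizon approximation proof is the standard route to exactly this statement (existence of the smallest positive semi-definite ARE solution and optimality of the associated feedback for the free-endpoint LQ problem without detectability), and it is essentially correct. Two points you flag or gloss over deserve one more line each: (i) the uniform bound on $P_{T}$ must be a \emph{matrix} bound, so the finite-cost steering control furnished by controllability should be taken linear in $\bar{x}_{0}$ (e.g.\ via the controllability Gramian), making $c(\bar{x}_{0})=\tfrac{1}{2}\bar{x}_{0}^{T}C\bar{x}_{0}$ for a fixed $C\geq 0$; (ii) for the passage to the limit, note that by time-invariance $T\mapsto P_{T}$ itself solves the autonomous forward Riccati ODE $\tfrac{d}{dT}P_{T}=\bar{A}^{T}P_{T}+P_{T}\bar{A}-P_{T}\bar{B}\bar{B}^{T}P_{T}+\bar{Q}$ with $P_{0}=0$, so once $P_{T}\to P^{-}$ the right-hand side converges to the ARE residual at $P^{-}$, which must vanish lest $P_{T}$ be unbounded. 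With these two remarks made explicit, the argument is complete; the completing-the-squares step establishing both minimality of $P^{-}$ and optimality of $v=-\bar{B}^{T}P^{-}\bar{x}$ is exactly right, including the observation that the non-negative boundary term $\bar{x}(T)^{T}P^{-}\bar{x}(T)$ can simply be dropped.
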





It is clear that under Assumption \ref{assumption3}, we have $(\bar{A},\bar{B})$ is controllable. Thus, by Lemma \ref{lemma2}, the algebraic Riccati equation (\ref{eq7}) related to the optimization problem (\ref{eq4})  has at least one positive semi-definite solution. In the following, we will consider the control strategy 
\begin{equation}\label{eq6}
    v(t) = -\bar{B}^{T}P\bar{x}(t),
\end{equation}
where $P$ is a real symmetric positive semi-definite solution of  (\ref{eq7}). Particularly, when $P=P^{-}$, the control strategy (\ref{eq6}) is an optimal solution of the linear quadratic optimal control problem (\ref{eq4}). The closed-loop system under the control strategy (\ref{eq6}) is written as follows, 
\begin{equation}\label{eq19}
    \dot{\bar{x}}=\widetilde{A}\bar{x},
\end{equation}
where
\begin{equation}\label{eq27}
    \widetilde{A}=\bar{A}-\bar{B}\bar{B}^{T}P.
\end{equation}

To analyze the dynamical behavior of (\ref{eq19}), we introduce the following lemmas about eigenvalues and eigenvectors of the closed-loop system matrix $\widetilde{A}$.

\begin{lemma}\label{lemma4}
   Suppose that $\widetilde{A}$ has at least one zero eigenvalue, and $h$ is any eigenvector of $\widetilde{A}$ corresponding to zero eigenvalue. Then under assumptions in Lemma \ref{lemma2}, we have $Ph=0$, where $P$ is any positive semi-definite solution of the algebraic Riccati equation (\ref{eq7}).
\end{lemma}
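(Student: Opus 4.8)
The plan is to show that a zero-eigenvalue eigenvector $h$ of the closed-loop matrix $\widetilde A$ must lie in the kernels of both $\bar B^{T}P$ and $\bar Q$, to bootstrap this to $\bar A^{T}(Ph)=0$, and then to invoke the controllability of $(\bar A,\bar B)$ to force $Ph=0$. (Since $0$ is a real eigenvalue of the real matrix $\widetilde A$, we may take $h$ real.) The first move is to rewrite the algebraic Riccati equation (\ref{eq7}) in ``closed-loop'' form: substituting $\bar A=\widetilde A+\bar B\bar B^{T}P$ from (\ref{eq27}) into (\ref{eq7}) and simplifying gives $\widetilde A^{T}P+P\widetilde A+P\bar B\bar B^{T}P+\bar Q=0$. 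Pre- and post-multiplying this identity by $h^{T}$ and $h$, and using $\widetilde A h=0$ together with $P=P^{T}$, the two terms containing $\widetilde A$ drop out, leaving $0=\Vert\bar B^{T}Ph\Vert^{2}+h^{T}\bar Q h$. Because $\bar Q$ is positive semi-definite, both summands are nonnegative and hence both vanish; in particular $\bar B^{T}Ph=0$, and $h^{T}\bar Q h=0$ forces $\bar Q h=0$ (for instance via a symmetric square root of $\bar Q$).

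Next, I would post-multiply the same closed-loop identity $\widetilde A^{T}P+P\widetilde A+P\bar B\bar B^{T}P+\bar Q=0$ by $h$ alone. Using $\widetilde A h=0$, $\bar B^{T}Ph=0$ and $\bar Q h=0$, every term except the first vanishes, giving $\widetilde A^{T}(Ph)=0$. Writing $\widetilde A^{T}=\bar A^{T}-P\bar B\bar B^{T}$ and applying $\bar B^{T}Ph=0$ once more then yields $\bar A^{T}(Ph)=0$. Therefore $g:=Ph$ satisfies simultaneously $\bar A^{T}g=0$ and $\bar B^{T}g=0$, i.e.\ $g^{T}\,[\,\bar A\ \ \bar B\,]=0$. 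Since $(\bar A,\bar B)$ is controllable (one of the standing hypotheses of Lemma \ref{lemma2}), the Popov--Belevitch--Hautus rank condition guarantees that $[\,\bar A-\lambda I\ \ \bar B\,]$ has full row rank for every $\lambda$; taking $\lambda=0$ shows that $[\,\bar A\ \ \bar B\,]$ has trivial left null space, so $g=Ph=0$, which is the assertion.

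The one place that calls for a little care is the ordering of the steps: the vector identity $\widetilde A^{T}(Ph)=0$ only becomes usable \emph{after} the scalar identity has been exploited to extract $\bar B^{T}Ph=0$ and $\bar Q h=0$. Apart from sequencing these deductions correctly, the argument is a direct manipulation of the Riccati equation followed by a single application of the PBH test, so I do not anticipate a genuine obstacle; it is also worth noting that symmetry of $P$ is used but its positive semi-definiteness is not actually needed here.
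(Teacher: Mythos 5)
Your proof is correct, and it takes a genuinely different route from the paper's. The paper argues dynamically: it takes the Lyapunov function $V(t)=\bar{x}^{T}(t)P\bar{x}(t)$ along trajectories of the closed-loop system (\ref{eq19}), shows $\dot V\le 0$, characterizes the set $\{\dot V=0\}$, repeatedly differentiates $\bar B^{T}P\bar x(t)=0$ to accumulate the Kalman rank condition $\bar x^{T}P[\bar B,\bar A\bar B,\dots,\bar A^{n-1}\bar B]=0$, invokes LaSalle's invariance principle to conclude $P\bar x(t)\to 0$ for every initial condition, and finally specializes to the constant trajectory $\bar x(t)\equiv h$. You instead stay entirely algebraic: the closed-loop Riccati identity $\widetilde A^{T}P+P\widetilde A+P\bar B\bar B^{T}P+\bar Q=0$ sandwiched by $h$ yields $\bar B^{T}Ph=0$ and $\bar Qh=0$, post-multiplying by $h$ alone then gives $\bar A^{T}(Ph)=0$, and the PBH test at $\lambda=0$ kills $Ph$. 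Your version is shorter, avoids the dynamical machinery (in particular it sidesteps the boundedness/precompactness hypothesis that LaSalle formally requires and that the paper does not verify), and makes visible that only symmetry of $P$, not its positive semi-definiteness, is used; the paper's version establishes the stronger side fact that $P\bar x(t)\to0$ from every initial state, though that fact is not reused elsewhere. Both arguments hinge on controllability of $(\bar A,\bar B)$ — yours through the PBH left-null-space form, the paper's through the equivalent Kalman controllability matrix. Your parenthetical reduction to real $h$ is also fine, since for a complex eigenvector of a real matrix at the real eigenvalue $0$ the real and imaginary parts are each in the kernel of $\widetilde A$, so $Ph=0$ follows componentwise.
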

\begin{proof} We consider the Lyapunov function $V(t) = \bar{x}^{T}(t)P\bar{x}(t)$. By direct calculations, we have
\begin{equation}\nonumber
    \dot{V} = -\bar{x}^{T}(t)P\bar{B}\bar{B}^{T}P\bar{x}(t)-\bar{x}^{T}(t)\bar{Q}\bar{x}(t)\leq 0.
\end{equation}
Furthermore, assuming $\dot{V}=0$, we have
\begin{equation}\label{eq21}
    \bar{B}^{T}P\bar{x}(t)=0, \bar{Q}\bar{x}(t)=0.
\end{equation}
By the equations (\ref{eq7}) and (\ref{eq21}), we have
\begin{equation}\label{eq22}
    \bar{A}^{T}P\bar{x}(t)+P\bar{A}\bar{x}(t)=0.
\end{equation}
By  the definition of $\widetilde{A}$ and taking derivative of $\bar{B}^{T}P\bar{x}(t)=0$, we have
\begin{equation}\nonumber
    0=\bar{B}^{T}P\widetilde{A}\bar{x}(t)=\bar{B}^{T}P\bar{A}\bar{x}(t)=-\bar{B}^{T}\bar{A}^{T}P\bar{x}(t),
\end{equation}
where  (\ref{eq21}) and (\ref{eq22}) are used in the above equation. Repeating the above process,  we have for $1\leq k\leq n-1$,
\begin{equation}\nonumber
    \bar{B}^{T}(\bar{A}^{T})^{k}P\bar{x}(t)=0.
\end{equation}
Thus, 
\begin{equation}\nonumber
    \bar{x}^{T}(t)P[\bar{B},\bar{A}B,\cdots,\bar{A}^{n-1}B]=0.
\end{equation}
By the assumption that $(\bar{A},\bar{B})$ is controllable, we have
\begin{equation}\nonumber
    \bar{x}^{T}(t)P=0.
\end{equation}
By the Lasalle invariance principle, we obtain that  $\lim_{t\rightarrow \infty} P\bar{x}(t)=0$ for any initial value $\bar{x}(0)$. When the initial value is taken as $\bar{x}
(0)=h$, the solution of (\ref{eq19}) is $\bar{x}(t)=h$. Thus, $Ph=0$. This completes the proof of the lemma.   
\end{proof}

\begin{lemma}\label{lemma3}
    Under Assumptions \ref{assumption1}-\ref{assumption3}, the closed-loop system matrix $\widetilde{A}$ has only one zero eigenvalue, and all the other eigenvalues of $\widetilde{A}$ have negative real parts. Furthermore,
    \begin{equation}\label{eq30}
    \psi_1 = \begin{bmatrix}
        x^{*} \\ u^{*}
    \end{bmatrix},
\end{equation}
is a right eigenvector subject to zero eigenvalue of $\widetilde{A}$,  where $x^{*}, u^{*}$ are defined in Assumption \ref{assumption2}.
\end{lemma}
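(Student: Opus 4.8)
The plan is to first confirm that $\psi_1$ annihilates $\widetilde A$, and then to split off exactly the one‑dimensional non‑detectable mode $\mathrm{span}\{\psi_1\}$ so that the remaining dynamics are governed by the classical linear‑quadratic theory. I would begin by recording three preliminary facts, writing $M=-L+aI_{n}$: (i) the block form (\ref{eq5}) together with the PBH test shows that $(\bar A,\bar B)$ is controllable under Assumption \ref{assumption3}; (ii) by Lemma \ref{lemma1}, $\mathcal S(\alpha)=\mathcal{SP}(\alpha)$ is a punctured line through the origin, hence $\ker Q=\mathrm{span}\{\alpha\}$; (iii) writing $x^{*}=p^{*}\alpha$ with $p^{*}\neq 0$, Assumption \ref{assumption2} gives $M\alpha=-(1/p^{*})Bu^{*}\in\mathrm{range}(B)$. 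The pivotal structural claim is then that the unobservable subspace of $(\bar A,\bar Q)$ equals $\mathrm{span}\{\psi_1\}$, with the eigenvalue of $\bar A$ on it being $0$. The inclusion of $\mathrm{span}\{\psi_1\}$ in the unobservable subspace is immediate from $\bar A\psi_1=0$ (Assumption \ref{assumption2}) and $\bar Q\psi_1=0$ (since $Qx^{*}=0$ for $x^{*}\in\mathcal S(\alpha)$); for the reverse inclusion, take a PBH eigenvector $w=[\xi^{T},\eta^{T}]^{T}$ at some $\lambda$: if $\lambda\neq 0$ the block structure of $\bar A$ forces $\eta=0$, reducing $w$ to an unobservable mode of $(M,Q)$ and contradicting Assumption \ref{assumption3}; if $\lambda=0$, then $Q\xi=0$ gives $\xi\in\mathrm{span}\{\alpha\}$, and $M\xi+B\eta=0$ combined with (iii) and the injectivity of $B$ forces $w$ to be a multiple of $\psi_1$.

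Next I would verify $\widetilde A\psi_1=0$. Multiplying the Riccati equation (\ref{eq7}) on the left by $\psi_1^{T}$ and on the right by $\psi_1$, the $\bar A$‑ and $\bar Q$‑terms vanish and only $\Vert\bar B^{T}P\psi_1\Vert^{2}=0$ survives, so $\bar B^{T}P\psi_1=0$ and $\widetilde A\psi_1=\bar A\psi_1-\bar B\bar B^{T}P\psi_1=0$; since $\psi_1\neq 0$ (because $\Vert x^{*}\Vert\geq p_{0}\sqrt n$), this proves (\ref{eq30}) and exhibits a zero eigenvalue, and Lemma \ref{lemma4} then yields $P\psi_1=0$. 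Now choose $T$ invertible whose first column is proportional to $\psi_1$ and pass to coordinates $\hat x=T^{-1}\bar x$ (so that $\bar A$ transforms by similarity, $\bar B$ by left multiplication by $T^{-1}$, and the quadratic forms $\bar Q,P$ by congruence); the transformed data take the block forms
\begin{equation}\nonumber
\hat A=\begin{bmatrix}0 & A_{12}\\ 0 & A_{22}\end{bmatrix},\ \hat Q=\begin{bmatrix}0 & 0\\ 0 & Q_{22}\end{bmatrix},\ \hat P=\begin{bmatrix}0 & 0\\ 0 & P_{22}\end{bmatrix},\ \hat B=\begin{bmatrix}B_{1}\\ B_{2}\end{bmatrix},
\end{equation}
where the zero blocks follow from $\bar A\psi_1=0$, $\bar Q\psi_1=0$, $P\psi_1=0$ together with the symmetry of $\bar Q$ and $P$. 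By the structural claim, $(A_{22},Q_{22})$ is observable, and the controllability of $(\bar A,\bar B)$ descends to the quotient pair $(A_{22},B_{2})$. The $(2,2)$ block of the transformed Riccati equation reads $A_{22}^{T}P_{22}+P_{22}A_{22}-P_{22}B_{2}B_{2}^{T}P_{22}+Q_{22}=0$, so $P_{22}$ is a positive semi‑definite solution of the algebraic Riccati equation associated with a stabilizable and detectable system; by the standard linear‑quadratic result (\cite{Trentelman2001}) it is therefore the unique positive semi‑definite solution and is stabilizing, whence $A_{22}-B_{2}B_{2}^{T}P_{22}$ is Hurwitz. Transforming $\widetilde A$ the same way gives $T^{-1}\widetilde A T=\begin{bmatrix}0 & * \\ 0 & A_{22}-B_{2}B_{2}^{T}P_{22}\end{bmatrix}$, so $\sigma(\widetilde A)=\{0\}\cup\sigma(A_{22}-B_{2}B_{2}^{T}P_{22})$; this shows $\widetilde A$ has a single zero eigenvalue and all remaining eigenvalues have negative real part.

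The step I expect to be the main obstacle is the structural claim in the first paragraph: proving that the non‑detectability of $(\bar A,\bar Q)$ is concentrated in the single mode $\mathrm{span}\{\psi_1\}$ and that this mode already sits at the origin of $\bar A$. This is exactly where all three assumptions must be used together — connectedness (via Lemma \ref{lemma1}) makes $\ker Q$ one‑dimensional, Assumption \ref{assumption2} places $\psi_1$ inside it, and the observability in Assumption \ref{assumption3} rules out any other undetectable direction. Once this is in place, the remainder is block bookkeeping plus the cited linear‑quadratic theorem.
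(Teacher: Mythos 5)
Your route is genuinely different from the paper's: the paper works directly on $\widetilde A$, using the Riccati identity to show that every eigenvalue with nonnegative real part is zero, that the zero eigenspace is one-dimensional, and that there is no generalized eigenvector at zero (via Lemma \ref{lemma4}); you instead split off $\mathrm{span}\{\psi_1\}$ and invoke the classical detectable-case LQ theorem on the quotient. Your verification of $\widetilde A\psi_1=0$ and $P\psi_1=0$, the similarity/congruence bookkeeping, the descent of controllability to $(A_{22},B_2)$, and the final spectral splitting are all correct. But there is one genuine gap, located exactly where you predicted the main obstacle would be.

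The gap is in the structural claim. Your PBH argument shows that the only unobservable \emph{mode} of $(\bar A,\bar Q)$ is $0$ and that the eigenvectors of $\bar A$ at $0$ lying in $\ker\bar Q$ form a one-dimensional space. That does not yet give that the unobservable \emph{subspace} (the largest $\bar A$-invariant subspace contained in $\ker\bar Q$) equals $\mathrm{span}\{\psi_1\}$: a priori it could contain a Jordan chain above $\psi_1$, i.e., some $\hat h=[\hat h_1^{T},\hat h_2^{T}]^{T}$ with $\bar Q\hat h=0$ and $\bar A\hat h=c\psi_1$, $c\neq 0$. Such an $\hat h$ descends to an unobservable eigenvector of $(A_{22},Q_{22})$ at eigenvalue $0$, which destroys the detectability you need; the appeal to the classical LQ result (unique psd solution, stabilizing) would fail at precisely that point. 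The fix is short: the second block of $\bar A\hat h=c\psi_1$ reads $0=cu^{*}$, and $u^{*}\neq 0$ — otherwise Assumption \ref{assumption2} would give a nonzero $x^{*}$ with $(-L+aI_{n})x^{*}=0$ and $Qx^{*}=0$, contradicting the observability in Assumption \ref{assumption3} — so $c=0$ and no such chain exists. This is exactly the content of the paper's ``no augmented eigenvector'' step; your decomposition relocates that difficulty to the pair $(\bar A,\bar Q)$ but does not discharge it.
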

\begin{proof} Let $\lambda$ be an eigenvalue of $\widetilde{A}$ whose real part is non-negative, and one of corresponding unit eigenvectors is $h$. Then by the equations (\ref{eq7}) and (\ref{eq27}), we have
\begin{equation}
    \label{eq43}
    \bar{h}^{T}(\widetilde{A}^{T}P+P\widetilde{A}+P\bar{B}\bar{B}^{T}P+\bar{Q})h=0,
\end{equation}
where $\bar{h}$ is the conjugate complex vector of $h$.
By direct calculations, we have 
\begin{equation}
    \nonumber
    Re(\lambda)\bar{h}^{T}Ph=0,\ \bar{B}^{T}Ph=0,\ \bar{Q}h=0,
\end{equation}
which indicates $\bar{A}h = \widetilde{A}h = \lambda h$.
Denote $h=[h_{1}^{T},h_{2}^{T}]^{T}$, where $h_{1}\in \mathbf{C}^{n}$ and $h_{2}\in \mathbf{C}^{m}$. Then we have
\begin{equation}\label{eq28}
    (-L+aI)h_{1}+Bh_{2}=\lambda h_{1}, \lambda h_{2}=0, Qh_{1}=0.
\end{equation}
We prove $\lambda = 0$ by reduction to absurdity. If $\lambda \neq 0$, then by (\ref{eq28}), we have $h_{2}=0$ and
\begin{equation}\label{eq29}
    (-L+aI)h_{1}=\lambda h_{1},  Qh_{1}=0.
\end{equation}
By (\ref{eq29}) and the Assumption \ref{assumption3}, we have $h_{1} = 0$. Thus $h=[h_{1}^{T},h_{2}^{T}]^{T}=0$, which contradicts to the fact that $h$ is a unit eigenvector of $\widetilde{A}$. Thus, we have $\lambda = 0$, which means that  $\widetilde{A}$ has  neither positive real parts eigenvalues nor imaginary eigenvalues. 

We now consider zero eigenvalues and the corresponding eigenvectors of $\widetilde{A}$. For this, we will first  prove that $\widetilde{A}h=0$ if and only if 
\begin{equation}\label{eq24}
    (-L+aI)h_{1}+Bh_{2}=0, Qh_{1}=0,
\end{equation}
where $h=[h_{1}^{T},h_{2}^{T}]^{T}\in \mathbf{C}^{n+m}$. If $\widetilde{A}h=0$, then $h$ is an eigenvector of $\widetilde{A}$ corresponding to the eigenvalue $\lambda=0$. By the above discussion, we have (\ref{eq28}), which indicates that (\ref{eq24}) holds since $\lambda=0$. Conversely, if (\ref{eq24}) holds, then by the definition of $\bar{A}$ and $\bar{Q}$, we have
\begin{equation}\label{eq44}
    \bar{A}h=0, \bar{Q}h=0.
\end{equation}
By (\ref{eq7}) we have
\begin{equation}\label{eq45}
    \bar{h}^{T}(\bar{A}^{T}P+PA-P\bar{B}\bar{B}^{T}P+\bar{Q})h = 0.
\end{equation}
By (\ref{eq44}) and (\ref{eq45}), we have $\bar{B}^{T}Ph=0$, which indicates $\widetilde{A}h=0$ by (\ref{eq27}) and (\ref{eq44}).

By Assumption \ref{assumption1}, Lemma \ref{lemma1} and the definition of $Q$, we know that the solution space of $Qh_{1}=0$  has dimension $1$. Furthermore, by Assumption \ref{assumption2} and the fact that $B$ has full column rank, we know that the dimension of the solution space of (\ref{eq24}) is also $1$. This indicates that $\widetilde{A}$ has only one linearly independent eigenvector subject to zero eigenvalue, and this eigenvector can be expressed as (\ref{eq30}).

Next, we prove that the multiplicity of zero eigenvalue of $\widetilde{A}$ is one. As $\widetilde{A}$ has only one eigenvector subject to zero eigenvalue, we just need to prove that $\widetilde{A}$ has no augmented eigenvector subject to zero eigenvalue. We prove this statement by reduction to absurdity. Assume that $\widetilde{A}$ has an augmented eigenvector $\hat{h}$ subject to zero eigenvalue, then $\widetilde{A}\hat{h}=h$, where $h$ is the eigenvector subject to zero eigenvalue. By Lemma \ref{lemma4}, we have $P\widetilde{A}\hat{h}=Ph=0$. Thus,
\begin{equation}\nonumber
\begin{aligned}
        &\bar{\hat{h}}^{T}(P\bar{B}\bar{B}^{T}P+\bar{Q})\hat{h} \\
        &=\bar{\hat{h}}^{T}(\widetilde{A}^{T}P+P\widetilde{A}+P\bar{B}\bar{B}^{T}P+\bar{Q})\hat{h}=0,
\end{aligned}
\end{equation}
where $\bar{\hat{h}}$ is the conjugate complex vector of $\hat{h}$. By this equation, we have
\begin{equation}\label{eq35}
    \bar{B}^{T}P\hat{h}=0, \bar{Q}\hat{h}=0.
\end{equation}
As $\widetilde{A}\hat{h}=h$, by (\ref{eq35}) and the definition of $\widetilde{A}$, we have
\begin{equation}\label{eq25}
    \bar{A}\hat{h}=h, \bar{Q}\hat{h}=0.
\end{equation}
 By (\ref{eq25}), we have $h_{2}=0$. As $\widetilde{A}h=0$, we have (\ref{eq24}). By $h_{2}=0$ and (\ref{eq24}), we obtain the following equation, 
\begin{equation}\label{eq38}
    (-L+aI_{n})h_{1}=0, Qh_{1}=0.
\end{equation}
By (\ref{eq38}) and Assumption \ref{assumption3}, we have $h_{1} = 0$. Thus $h=[h_{1}^{T},h_{2}^{T}]^{T}=0$, which contradicts to the definition of $h$. This completes the proof of this lemma. 
\end{proof}

By Lemma \ref{lemma3}, we see that under Assumptions \ref{assumption1}-\ref{assumption3}, $\widetilde{A}$ has only one zero eigenvalue, and $\psi_{1}$ defined in (\ref{eq30}) is an right eigenvector subject to zero eigenvalue. Correspondingly, we denote $\hat{\psi}_{1}$ as an left eigenvector of $\widetilde{A}$ satisfying the following equations,
\begin{equation}\label{eq36}
    \hat{\psi}_{1}^{T}\widetilde{A}=0, \hat{\psi}_{1}^{T}\psi_1 = 1.
\end{equation}   

Introduce the following set 
\begin{equation}\nonumber
    U_{1} = \left\{\zeta\Big | \lvert \hat{\psi}_{1}^{T}\zeta \rvert  > \frac{p_{0}\sqrt{n}}{\Vert x^{*} \Vert} \right\}.
\end{equation}
We now show that the desired patterns can be obtained when the initial states $\bar{x}(0)$ of the system (\ref{eq19}) are taken from the set $U_{1}$. 

\begin{theorem}\label{theorem1}
    Under Assumptions \ref{assumption1}-\ref{assumption3}, the states of the closed-loop system (\ref{eq19}) will converge to a limit denoted as $\bar{x}_{\infty}=[x^{T}_{\infty}, z^{T}_{\infty}]^{T}$. If $\bar{x}(0)\in U_{1}$, then $x_{\infty} \in \mathcal{S}(\alpha)$.
\end{theorem}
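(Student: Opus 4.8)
The plan is to exploit the eigenstructure of $\widetilde{A}$ established in Lemma \ref{lemma3}. Since $\widetilde{A}$ has a single zero eigenvalue (algebraically and geometrically simple, by Lemma \ref{lemma3}) with all other eigenvalues having negative real parts, I would decompose $\mathbf{R}^{n+m}$ as the direct sum of the one-dimensional kernel $\mathrm{span}\{\psi_1\}$ and the $\widetilde{A}$-invariant stable subspace $\mathcal{X}_s$ associated with the remaining eigenvalues. Writing the initial condition as $\bar{x}(0) = c\,\psi_1 + w_0$ with $w_0 \in \mathcal{X}_s$, the solution of the linear ODE $\dot{\bar{x}} = \widetilde{A}\bar{x}$ splits as $\bar{x}(t) = c\,\psi_1 + e^{\widetilde{A}t}w_0$, and the stable part decays to zero exponentially. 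Hence $\bar{x}(t) \to c\,\psi_1 =: \bar{x}_\infty$, which proves the first claim (convergence to a limit $\bar{x}_\infty = [x_\infty^T, z_\infty^T]^T$). Moreover $c$ is exactly the projection coefficient, so pairing with the left eigenvector from (\ref{eq36}) gives $c = \hat{\psi}_1^T \bar{x}(0)$ (using $\hat{\psi}_1^T w_0 = 0$, since the left zero-eigenvector annihilates the complementary invariant subspace), and therefore $x_\infty = c\, x^{*} = (\hat{\psi}_1^T \bar{x}(0))\, x^{*}$.

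For the second claim, it remains to check that $x_\infty \in \mathcal{S}(\alpha)$ when $\bar{x}(0) \in U_1$. By Lemma \ref{lemma1} and the defining relation of $Q$ in (\ref{eqmatrixQ}), the set $\mathcal{S}(\alpha)$ consists of vectors $x$ with $Qx = 0$ and $\|x\| \geq p_0\sqrt{n}$. Since $\psi_1 = [x^{*T}, u^{*T}]^T$ is an eigenvector of $\widetilde{A}$ subject to zero eigenvalue, equation (\ref{eq24}) gives $Q x^{*} = 0$, and Assumption \ref{assumption2} already guarantees $x^{*} \in \mathcal{S}(\alpha)$, in particular $\|x^{*}\| \geq p_0\sqrt{n}$. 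Therefore $x_\infty = c\,x^{*}$ satisfies $Q x_\infty = 0$ automatically, and the only thing to verify is the norm bound $\|x_\infty\| = |c|\,\|x^{*}\| \geq p_0\sqrt{n}$, i.e. $|c| \geq p_0\sqrt{n}/\|x^{*}\|$. But $|c| = |\hat{\psi}_1^T \bar{x}(0)| > p_0\sqrt{n}/\|x^{*}\|$ is precisely the condition $\bar{x}(0) \in U_1$, so the bound holds and $x_\infty \in \mathcal{S}(\alpha)$.

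The main obstacle — and the step needing the most care — is justifying the clean spectral splitting $\mathbf{R}^{n+m} = \mathrm{span}\{\psi_1\} \oplus \mathcal{X}_s$ and the identity $c = \hat{\psi}_1^T \bar{x}(0)$. This relies on the zero eigenvalue being not just geometrically but algebraically simple (no Jordan block), which Lemma \ref{lemma3} supplies via the nonexistence of a generalized eigenvector; with that in hand, $\widetilde{A}$ is similar to a block-diagonal matrix $\mathrm{diag}(0, \widetilde{A}_s)$ with $\widetilde{A}_s$ Hurwitz, and the projection onto the zero-eigenspace along $\mathcal{X}_s$ is exactly $\psi_1 \hat{\psi}_1^T$ after the normalization $\hat{\psi}_1^T\psi_1 = 1$. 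One should also note a subtlety: $\widetilde{A}$ need not be symmetric, so the decay of $e^{\widetilde{A}t}$ on $\mathcal{X}_s$ must be argued from the spectral bound (e.g. all eigenvalues of $\widetilde{A}_s$ in the open left half-plane implies $\|e^{\widetilde{A}_s t}\| \to 0$), not from any orthogonality. Everything else — membership of $x^{*}$ in $\mathcal{S}(\alpha)$, the algebraic fact $Q x^{*} = 0$, and the translation of the $U_1$ condition into the norm bound — is then routine.
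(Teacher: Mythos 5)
Your proof is correct and follows essentially the same route as the paper's: both use Lemma \ref{lemma3} to isolate the simple zero eigenvalue, identify the limit as $(\hat{\psi}_1^{T}\bar{x}(0))\psi_1$ via the normalized left/right eigenvector pair, and translate the $U_1$ condition into the norm bound needed for $x_\infty \in \mathcal{S}(\alpha)$. If anything, your version is slightly more careful, since the paper writes the limit as $T\,\mathrm{diag}(1,0,\cdots,0)T^{-1}\bar{x}(0)$, implicitly assuming diagonalizability, whereas your splitting into $\mathrm{span}\{\psi_1\}$ and the stable invariant subspace only requires the algebraic simplicity of the zero eigenvalue that Lemma \ref{lemma3} actually provides.
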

\begin{proof} Denote all eigenvalues of $\widetilde{A}$ as  $\lambda_i (1\leq i\leq n)$, and the eigenvectors corresponding to   $\lambda_i$ as $\psi_i$. By Lemma \ref{lemma3}, we see that one eigenvalue is zero, and all other eigenvalues have negative real parts. We denote $\lambda_1$ as the zero eigenvalue, and $\psi_1$ defined in (\ref{eq30})  is a corresponding right  eigenvector. By straight calculations, we have
\begin{equation}
\begin{aligned}\nonumber
    &\hat{\bar{x}} \triangleq \lim_{t\rightarrow \infty} \bar{x}(t) = \lim_{t\rightarrow \infty}\mathbf{e}^{\widetilde{A}t}\bar{x}(0) \\
    &= T diag(1,0,\cdots,0)T^{-1}\bar{x}(0),
\end{aligned}
\end{equation}
where $T=[\psi_1,\cdots,\psi_n]$. Denote $T^{-1}=[\hat{\psi}_1,\cdots,\hat{\psi}_n]^{T}$. Then we have $\hat{\psi}_1^{T}\widetilde{A}=0$ and $\hat{\psi}_1^{T}\psi_1=1$. By straight calculation, we have $\hat{\bar{x}}=(\hat{\psi}_1^{T} \bar{x}(0)) \psi_1$. When $\bar{x}(0) \in U_{1}$, we have $\hat{x}=(\hat{\psi}_1^{T} \bar{x}(0))x^{*} \in \mathcal{S}(\alpha)$. This completes the proof of the theorem.
\end{proof}

By Theorem \ref{theorem1}, we see that the control strategy (\ref{eq2}) and (\ref{eq6}) can  drive the states of the system (\ref{eq1}) to a given pattern $\mathcal{S}(\alpha)$.

\begin{remark}
It is well-known that for the classical linear quadratic optimal control problem, if $(\bar{A},\bar{B})$ is controllable and $(\bar{A},\bar{Q})$ is detectable, then the states of the closed-loop system under optimal control strategy must converge to the origin. While for the linear quadratic optimal control problem (\ref{eq4}) under consideration, $(\bar{A},\bar{B})$ is controllable but  $(\bar{A},\bar{Q})$ is not detectable because $\bar{A}\psi_{1}=0$, $\bar{Q}\psi_{1}=0$ where the nonzero vector $\psi_{1}$ is defined in (\ref{eq30}). This makes it possible to drive  the states of the closed-loop system under the  optimal control strategy (\ref{eq6})  to the desired patterns.
\end{remark}

\section{Distributed control strategy}\label{section4}

In Section \ref{section3}, we show that under the control strategy (\ref{eq6}), the states of the system (\ref{eq4}) can be driven to the desired pattern. However, the control strategy (\ref{eq6}) depends on the states of all agents, which means that 
it is a kind of centralized control method. In this section, we will investigate how to implement the control strategy (\ref{eq6}) in a distributed way as stated in Step \ref{step2}.



We rewrite $\bar{B}v$ in (\ref{eq4}) as $\sum_{j=1}^{m}\bar{B}_{j}v_{j}$, where 
\begin{equation}
    \label{eq37}
    \bar{B}_{j}=\begin{bmatrix} 0_{n,1} \\ \varepsilon^{(m)}_{j}
    \end{bmatrix}
\end{equation}
is the control matrix, $v_{j}$ is the control input on the leader $i_{j}$, $m$ is the number of leaders, and $\varepsilon^{(m)}_{j}$ is the $j$th column of the $m\times m$ unit matrix. 

For each leader agent $i_{j}$, it can obtain its own system states $x_{i_{j}}$ and integrator states $z_{j}$, that is the vector
\begin{equation}\label{eq8}
    o_j = C_{j}\bar{x},
\end{equation}
where 
    $$C_{j}=\begin{bmatrix} (\varepsilon^{(n)}_{i_{j}})^{T} & 0_{1,m} \\ 0_{1,n} & (\varepsilon^{(m)}_{j})^{T}
    \end{bmatrix}$$
is a $2\times(m+n)$ matrix with $n$ being the number of agents. In addition, each leader agent can communication with its neighbors in the leader graph $\mathcal{G}_{l}$. 


To implement this control strategy in a distributed way, we will estimate the full state of the system at each control node by a distributed observer, and then replace the state $\bar{x}(t)$ in (\ref{eq6}) by the estimation state $\hat{\bar{x}}_j(t)$ at each control node $i_{j}$.  In order to make the distributed observer effectively, we assume that a leader $i_{j}$ can receive the relative information from all of its neighbors in the graph $\mathcal{G}_{l}$, i.e.,  $\{\hat{\bar{x}}_{i_{k}}(t)-\hat{\bar{x}}_{i_{j}}(t)\}_{i_{k}\in \mathcal{N}^{(l)}_{i_{j}}}$.

For each control node $i_{j}\in \mathcal{V}_{l}$, we propose the following distributed control strategy by using only local information,
\begin{equation}\label{eq9}
    v_{j}(t) = -\bar{B}_{j}^{T}P\hat{\bar{x}}_{j}(t),
\end{equation}
where $P$ is defined in (\ref{eq6}), $\hat{\bar{x}}_{j}(t)$ is an estimation of the state $\bar{x}(t)$, and is obtained by the following distributed observer(c.f., \cite{Yang2022,Han2019}),
\begin{equation}\label{eq10}   
   \dot{\hat{\bar{x}}}_j = N_j \hat{\bar{x}}_j+F_j o_j + \chi T^{-1}\sum_{i_{k}\in \mathcal{N}^{(l)}_{i_{j}}}(\hat{\bar{x}}_k - \hat{\bar{x}}_j), 
\end{equation}
where the matrices $N_j$, $F_j$, $T$ and the constant $\chi$ will be determined later,  $o_j$ is defined in (\ref{eq8}), and $\mathcal{N}^{(l)}_{i_{j}}$ is the set of neighbors of leader agent $i_{j}$ in the leader graph $\mathcal{G}_{l}$.

Define the estimation error as $e_{j}(t) = \hat{\bar{x}}_{j}(t)-\bar{x}(t)$. Then, by direct calculations, we have
\begin{equation}\nonumber
\begin{aligned}
    \dot{e}_{j}(t) = &(N_{j}+\bar{B}\bar{B}^{T}P)e_{j}(t)\\
&+\sum_{k=1}^{m}\bar{B}_{k}\bar{B}_{k}^{T}P(e_{k}(t)-e_{j}(t))\\ 
    &+\chi T^{-1}\sum_{i_{k}\in \mathcal{N}^{(l)}_{i_{j}}}(e_{k}(t)-e_{j}(t)) \\
    &+ (N_{j}+\bar{B}\bar{B}^{T}P-\bar{A}+F_{j}C_{j})\bar{x}(t),
\end{aligned}
\end{equation}
with the initial estimation error as 
\begin{equation}\label{eq39}
    e_{j}(0)= \hat{\bar{x}}_{j}(0)-\bar{x}(0).
\end{equation} 

We introduce the following assumption to assure the convergence of the distributed observer.
\begin{assumption}\label{assumption5}
    The leader graph $\mathcal{G}_{l}$ is connected. 
\end{assumption}

We now show how to take the matrices $N_j$, $F_j$, $T$ and the constant $\chi$. 
First, we take 
$N_j$ as follows, 
\begin{equation}\label{eq41}
    N_{j}=\bar{A}-\bar{B}\bar{B}^{T}P-F_{j}C_{j}.
\end{equation}
 By Assumption \ref{assumption3}, we can easily prove that $(\bar{A},C)$ is observable where $C = [C_{1}^{T},\cdots,C_{m}^{T}]^{T}$ with $C_{j}(1\leq j\leq m)$ being defined in (\ref{eq8}). Thus, there exists a matrix $F$ such that $m\bar{A}-FC$ is Hurwitz. The $(n+m)\times 2$ matrix $F_{j}$ is chosen such that 
 \begin{equation}\label{eq46}
     [F_{1},\cdots,F_{m}]=F.
 \end{equation}
Based on this, the matrix  $T$ can be chosen as the positive definite solution of the following Lyapunov equation,
    \begin{equation}\label{eq11}
        (m\bar{A}-FC)^{T}T+T(m\bar{A}-FC)=-I.
    \end{equation}
For $1\leq i\leq m$, we define $\Lambda_{i} = (\bar{A}-F_{i}C_{i})^{T}T+T(\bar{A}-F_{i}C_{i})$. The constant $\chi$ is chosen as a positive constant satisfying
    \begin{equation}\label{eq16}
    \begin{aligned}
        \chi > &\frac{1}{2\lambda_{2}(L_{1})}\lambda_{max}(\Lambda+(\bar{T}K+K^{T}\bar{T})\\
        &+(\widetilde{\Lambda}+\widetilde{T}^{T}K)^{T}W^{-1}(\widetilde{\Lambda}+\widetilde{T}^{T}K)),    
    \end{aligned}
    \end{equation}
    where 
    \begin{equation}\nonumber
        \begin{aligned}
            &\Lambda = diag(\Lambda_{1},\cdots,\Lambda_{m}),\\
            &\widetilde{\Lambda} = [\Lambda_{1},\cdots,\Lambda_{m}],\\
            &\bar{T} = I_{m}\otimes T,\\
           & \widetilde{T} = \mathbf{1}_{m}^{T}\otimes T,\\
            &K = I_{m}\otimes (-\bar{B}\bar{B}^{T}P)\\
        &\ \ \ \ \ \ \ +\mathbf{1}_{m}\otimes [\bar{B}_{1}\bar{B}_{1}^{T}P,\cdots,\bar{B}_{m}\bar{B}_{m}^{T}P],
        \end{aligned}
    \end{equation}
    $\lambda_{2}(L_{1})$ is the second smallest eigenvalue of $L_{1}$, $L_{1}$ is the Laplacian matrix of the leader graph $\mathcal{G}_{l}$, and $\mathbf{1}_{m}$ represents a vector of $m$ dimension in which all the elements are $1$. Assumption \ref{assumption5} assures $\lambda_{2}(L_{1})>0$.

Thus, we have
\begin{equation}\label{eq13}
\begin{aligned}
    \dot{e}_{j}(t) = &(\bar{A}-F_{j}C_{j})e_{j}(t)\\
    &+\sum_{k=1}^{m}\bar{B}_{j}\bar{B}_{j}^{T}P(e_{k}(t)-e_{j}(t))\\
    &+\chi T^{-1}\sum_{i_{k}\in \mathcal{N}^{(l)}_{i_{j}}}(e_{k}(t)-e_{j}(t)).
\end{aligned}
\end{equation}

According to the choice of the matrices $N_j$, $F_j$, $T$ and the constant $\chi$ in (\ref{eq41})-(\ref{eq16}),  we can derive the following  convergence result of the estimation error $e_{j}(t)(1\leq j\leq m)$.

\begin{theorem}\label{theorem2}
     Under Assumptions \ref{assumption3} and \ref{assumption5}, the estimation error for the distributed observer with the control strategy (\ref{eq9}) and (\ref{eq10}) can converge to zero, i,e.,  for $1\leq j\leq m$, we have
    \begin{equation}\nonumber
        \lim_{t\rightarrow \infty}e_{j}(t)=0.    \end{equation}
    
\end{theorem}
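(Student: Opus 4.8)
The plan is to prove convergence with a single quadratic Lyapunov function for the stacked error $e(t)=[e_{1}(t)^{T},\ldots,e_{m}(t)^{T}]^{T}$, exploiting that Assumption \ref{assumption3} already guarantees the existence of the gain $F$ making $m\bar{A}-FC$ Hurwitz, and hence of the positive definite $T$ solving (\ref{eq11}). Writing the error equations (\ref{eq13}) --- whose coupling term is $\sum_{k}\bar{B}_{k}\bar{B}_{k}^{T}P(e_{k}-e_{j})$, as in the line preceding it --- in stacked form gives $\dot{e}=(\widehat{A}+K-\chi(L_{1}\otimes T^{-1}))e$, where $\widehat{A}=\mathrm{diag}(\bar{A}-F_{1}C_{1},\ldots,\bar{A}-F_{m}C_{m})$, $K$ is the matrix defined just before (\ref{eq16}) (the identity $\bar{B}\bar{B}^{T}=\sum_{k}\bar{B}_{k}\bar{B}_{k}^{T}$ is precisely what lets the $\bar{B}_{k}\bar{B}_{k}^{T}P$-couplings assemble into $Ke$), and $L_{1}\otimes T^{-1}$ encodes the observer exchange over $\mathcal{G}_{l}$. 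I would take $V(t)=e(t)^{T}(I_{m}\otimes T)e(t)$, which is positive definite since $T\succ0$. Differentiating along the dynamics and using $(I_{m}\otimes T)(L_{1}\otimes T^{-1})=L_{1}\otimes I$ gives $\dot{V}=e^{T}M_{\chi}e$, where $M_{\chi}=\Lambda+\bar{T}K+K^{T}\bar{T}-2\chi(L_{1}\otimes I)$, $\Lambda=\mathrm{diag}(\Lambda_{1},\ldots,\Lambda_{m})$, and $\bar{T}=I_{m}\otimes T$. The theorem reduces to showing $M_{\chi}\prec0$ whenever $\chi$ obeys (\ref{eq16}).

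Next I would use the consensus splitting $e=\mathbf{1}_{m}\otimes\bar{e}+\delta$, with $\bar{e}=\tfrac{1}{m}\sum_{j}e_{j}$ and $(\mathbf{1}_{m}^{T}\otimes I)\delta=0$. Two identities make the block form of $M_{\chi}$ in these coordinates tractable. First, $\sum_{j}\Lambda_{j}=(m\bar{A}-FC)^{T}T+T(m\bar{A}-FC)=-I$ by (\ref{eq46}) and the Lyapunov equation (\ref{eq11}); hence the consensus block of $M_{\chi}$ is $-\bar{e}^{T}W\bar{e}$ with $W=I$. Second, $K(\mathbf{1}_{m}\otimes w)=0$ for every $w$, since on a consensus vector the terms $-\bar{B}\bar{B}^{T}Pw$ and $\sum_{k}\bar{B}_{k}\bar{B}_{k}^{T}Pw$ cancel; therefore the $K^{T}\bar{T}$ part annihilates $\mathbf{1}_{m}\otimes\bar{e}$, the cross block equals $2\bar{e}^{T}(\widetilde{\Lambda}+\widetilde{T}^{T}K)\delta$, and the disagreement block equals $\delta^{T}(\Lambda+\bar{T}K+K^{T}\bar{T})\delta-2\chi\,\delta^{T}(L_{1}\otimes I)\delta$. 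Since $\mathcal{G}_{l}$ is connected (Assumption \ref{assumption5}) and $\delta$ is orthogonal to the consensus subspace, $\delta^{T}(L_{1}\otimes I)\delta\ge\lambda_{2}(L_{1})\|\delta\|^{2}$ with $\lambda_{2}(L_{1})>0$.

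Thus, in the $(\bar{e},\delta)$ decomposition $M_{\chi}$ has consensus block $-W$, cross block $\widetilde{\Lambda}+\widetilde{T}^{T}K$, and disagreement block $\Lambda+\bar{T}K+K^{T}\bar{T}-2\chi(L_{1}\otimes I)$. Because $-W=-I\prec0$, a Schur-complement argument (equivalently, completing the square in $\bar{e}$) shows $M_{\chi}\prec0$ exactly when the Schur complement $\Lambda+\bar{T}K+K^{T}\bar{T}+(\widetilde{\Lambda}+\widetilde{T}^{T}K)^{T}W^{-1}(\widetilde{\Lambda}+\widetilde{T}^{T}K)-2\chi(L_{1}\otimes I)$ is negative definite on the disagreement subspace; using $\delta^{T}(L_{1}\otimes I)\delta\ge\lambda_{2}(L_{1})\|\delta\|^{2}$ there, this holds as soon as $\lambda_{\max}\big(\Lambda+(\bar{T}K+K^{T}\bar{T})+(\widetilde{\Lambda}+\widetilde{T}^{T}K)^{T}W^{-1}(\widetilde{\Lambda}+\widetilde{T}^{T}K)\big)<2\chi\lambda_{2}(L_{1})$ --- which is precisely the strict inequality (\ref{eq16}). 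Hence under (\ref{eq16}) there is $c>0$ with $\dot{V}\le-c\|e\|^{2}\le-(c/\lambda_{\max}(T))\,V$, so $V(t)\to0$ exponentially and $\lim_{t\to\infty}e_{j}(t)=0$ for every $j$, which is the claim.

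The step I expect to be the main obstacle is the Kronecker-product bookkeeping that cleanly separates the averaged error from the disagreement: verifying $K(\mathbf{1}_{m}\otimes w)=0$ and $\sum_{j}\Lambda_{j}=-I$, and then checking that the cross-term contribution produced by the Schur complement is exactly the summand $(\widetilde{\Lambda}+\widetilde{T}^{T}K)^{T}W^{-1}(\widetilde{\Lambda}+\widetilde{T}^{T}K)$ appearing inside $\lambda_{\max}(\cdot)$ in (\ref{eq16}) (with $W=I$). The Schur-complement/negative-definiteness step and the exponential-stability conclusion are then routine.
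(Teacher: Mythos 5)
Your proposal is correct and follows essentially the same route as the paper: the Lyapunov function $V=e^{T}(I_{m}\otimes T)e$, the decomposition of $e$ into its consensus component and its orthogonal complement, and a Schur-complement argument showing that condition (\ref{eq16}) makes $\dot V$ negative definite. You in fact supply details the paper leaves implicit (that $\sum_{j}\Lambda_{j}=-I$ via (\ref{eq46}) and (\ref{eq11}) so that $W=I$, that $K$ annihilates consensus vectors, and that the coupling in (\ref{eq13}) should read $\bar{B}_{k}\bar{B}_{k}^{T}P$ as in the preceding display), so no gap remains.
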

\begin{proof}
  Consider the following Lyapunov function,
\begin{equation}\nonumber
    V(t) = \sum_{i=1}^{m}e_{i}^{T}(t)Te_{i}(t).
\end{equation}
Computing the derivative of $V(t)$ along (\ref{eq13}), we have
\begin{equation}\label{eq14}
\begin{aligned}
    \dot{V} &= e^{T}(t)\Lambda e(t)+e^{T}(t)(\bar{T}K+K^{T}\bar{T})e(t)\\
    &-2\chi e^{T}(t)(L_{1}\otimes I_{n})e(t),
\end{aligned}
\end{equation}
where $e = [e_{1},\cdots,e_{m}]^{T}$. Define $U_{c} = \{\mathbf{1}_{m}\otimes \omega|\omega \in \mathbf{R}^{n}\}$. Decompose $e$ according to such a manner $e = e_{c}+e_{r}$, where $e_{c}\in U_{c}$ and $e_{r}\in U_{c}^{\perp}$. Then, substituting this decomposition into  (\ref{eq14}) yields the following inequality, 
\begin{equation}\label{eq15}
\begin{aligned}
    &\dot{V} \leq -\begin{bmatrix}
        \omega \\ e_{r}
    \end{bmatrix}^{T}\\ 
    &\begin{bmatrix}
        W & -(\widetilde{\Lambda}+\widetilde{T}K)\\
        -(\widetilde{\Lambda}+\widetilde{T}K)^{T} & 2\chi \lambda_{2}(L_{1})I-\Phi
    \end{bmatrix}\begin{bmatrix}
        \omega \\ e_{r}
    \end{bmatrix}.
\end{aligned}
\end{equation}
where $\Phi=(\bar{T}K+K^{T}\bar{T})+\Lambda$. 
By (\ref{eq16}), one obtains that
\begin{equation}\label{eq17}
\begin{aligned}
    &2\chi \lambda_{2}(L_{1})I_{m(n+m)}-\Phi \\
    &-(\widetilde{\Lambda}+\widetilde{T}^{T}K)^{T}W^{-1}(\widetilde{\Lambda}+\widetilde{T}^{T}K)
\end{aligned}
\end{equation}
is positive definite. By (\ref{eq17}) and Schur Complement Lemma, we can derive that  $\dot{V}$ in (\ref{eq15}) is  negative definite. This completes the proof of the theorem.
 \end{proof}

Theorem \ref{theorem2} indicates that if the distributed control strategy (\ref{eq9}) and (\ref{eq10}) is applied to the system (\ref{eq4}), and (\ref{eq10}) is viewed as observers, then the estimation errors will converge to zero.

Rewrite (\ref{eq13}) and (\ref{eq39}) as the following equation,
\begin{equation}\label{eq26}
\begin{aligned}
    &\dot{e}(t)=\hat{W}e(t),\\
    &e(0)=[e_{1}^{T}(0),\cdots,e_{m}^{T}(0)]^{T},
\end{aligned}
\end{equation}
where $e(t)=[e_{1}^{T}(t),\cdots,e_{m}^{T}(t)]^{T}$, and $\hat{W}$ is the corresponding system matrix. 

By the above discussion, $\bar{x}(t)$ and $e(t)$ satisfy the following equation,
\begin{equation}\nonumber
    \begin{bmatrix}
        \dot{\bar{x}}(t)\\ \dot{e}(t)
    \end{bmatrix}
    = 
    \hat{M}
    \begin{bmatrix}
        \bar{x}(t)\\ e(t)
    \end{bmatrix},
\end{equation}
where
\begin{equation}\label{eq40}
    \hat{M} = \begin{bmatrix}
        \widetilde{A}& -\hat{K} \\ 0 & \hat{W}
    \end{bmatrix},
\end{equation} 
\begin{equation}\nonumber
    \hat{K}=[\bar{B}_{1}\bar{B}_{1}^{T}P,\cdots,\bar{B}_{m}\bar{B}_{m}^{T}P],
\end{equation}
$\widetilde{A}$ is defined in (\ref{eq27}), and $\bar{B}_{j}$ is defined in (\ref{eq37}). 

Denote 
\begin{equation}\nonumber
\begin{aligned}
    &U_{2} =\\
    &\left\{[\zeta^{T},\eta^{T}]^{T}|\  \lvert \hat{\psi}_{1}^{T}(\zeta+\hat{K}\hat{W}^{-1}\eta) \rvert  > \frac{p_{0}\sqrt{n}}{\Vert x^{*} \Vert} \right\},
\end{aligned}
\end{equation}
where $\hat{\psi}_{1}$ is defined in (\ref{eq36}) and $x^{*}$ is defined in (\ref{eq30}).

For the system (\ref{eq4}) and the distributed control strategy (\ref{eq7}), (\ref{eq9}) and (\ref{eq10}), we have the following results.

\begin{theorem}\label{theorem3}
    Under Assumptions \ref{assumption1}-\ref{assumption5}, we have $\lim_{t\rightarrow \infty} \bar{x}(t)=\lim_{t\rightarrow \infty}\hat{\bar{x}}_{j}(t)$, where $\bar{x}(t)$ and $\hat{\bar{x}}_{j}(t)$ are the states of the system (\ref{eq4}) and (\ref{eq10}). Furthermore, denote the limit of $\bar{x}(t)$ as $\bar{x}_{\infty}=[x^{T}_{\infty},z^{T}_{\infty}]^{T}$,  where $x_{\infty}\in \mathbf{R}^{n}$ and $z_{\infty}\in \mathbf{R}^{m}$. If $[\bar{x}(0)^{T},e^{T}(0)]^{T} \in U_{2}$, then we have $x_{\infty}\in \mathcal{S}(\alpha)$.
\end{theorem}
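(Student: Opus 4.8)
The plan is to analyze the block-triangular system matrix $\hat{M}$ in (\ref{eq40}) by exploiting its structure: the $(1,1)$ block is $\widetilde{A}$, whose spectral properties are fully described by Lemma \ref{lemma3} (one simple zero eigenvalue with eigenvector $\psi_1$, all others with negative real part), and the $(2,2)$ block is $\hat{W}$, which by Theorem \ref{theorem2} is Hurwitz. First I would argue that the eigenvalues of $\hat{M}$ are exactly the union of the eigenvalues of $\widetilde{A}$ and of $\hat{W}$, so $\hat{M}$ also has a single simple zero eigenvalue and all other eigenvalues in the open left half-plane; consequently $e^{\hat{M}t}$ converges as $t\to\infty$ to the spectral projection onto the zero-eigenvalue eigenspace. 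Since $\hat{W}$ is nonsingular, the zero eigenvector of $\hat{M}$ has the form $[\psi_1^T, 0]^T$, which already gives $\lim_{t\to\infty} e(t) = 0$ (recovering Theorem \ref{theorem2}) and $\lim_{t\to\infty}\bar{x}(t) = \lim_{t\to\infty}\hat{\bar{x}}_j(t)$ since $e_j(t)\to 0$.

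The second step is to compute the limit $\bar{x}_\infty$ explicitly. I would find the left eigenvector of $\hat{M}$ associated with the zero eigenvalue: writing it as $[\hat{\psi}_1^T, g^T]^T$ and using the block structure, the condition $[\hat{\psi}_1^T, g^T]\hat{M} = 0$ forces $\hat{\psi}_1^T\widetilde{A} = 0$ (consistent with (\ref{eq36})) and $-\hat{\psi}_1^T\hat{K} + g^T\hat{W} = 0$, i.e. $g^T = \hat{\psi}_1^T\hat{K}\hat{W}^{-1}$. Normalizing so the left and right zero-eigenvectors pair to $1$, the limit of the full state $[\bar{x}(t)^T, e(t)^T]^T$ is the projection $\big(\hat{\psi}_1^T\bar{x}(0) + \hat{\psi}_1^T\hat{K}\hat{W}^{-1}e(0)\big)[\psi_1^T,0]^T$. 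Reading off the first $n$ components gives $x_\infty = \big(\hat{\psi}_1^T(\bar{x}(0) + \hat{K}\hat{W}^{-1}e(0))\big)x^{*}$, exactly as in the proof of Theorem \ref{theorem1} but with the initial condition shifted by the observer-error contribution $\hat{K}\hat{W}^{-1}e(0)$.

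The final step is to invoke the definition of $U_2$: when $[\bar{x}(0)^T, e^T(0)]^T \in U_2$, the scalar multiplier $p \triangleq \hat{\psi}_1^T(\bar{x}(0) + \hat{K}\hat{W}^{-1}e(0))$ satisfies $\lvert p\rvert > p_0\sqrt{n}/\Vert x^{*}\Vert$, hence $\Vert x_\infty\Vert = \lvert p\rvert\,\Vert x^{*}\Vert > p_0\sqrt{n}$; combined with $x^{*}\in\mathcal{S}(\alpha)$ (so that $x_\infty = p x^{*}$ still satisfies the homogeneous equations $(D+\mathcal{A}^{(1)}(\alpha)-\mathcal{A}^{(2)}(\alpha))x_\infty = 0$), this yields $x_\infty\in\mathcal{S}(\alpha)$ by the characterization (\ref{eq31}).

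I expect the main obstacle to be the first step — rigorously justifying that $\hat{M}$ has no eigenvalues on the imaginary axis other than the simple zero, and that the zero eigenvalue remains \emph{simple} (no Jordan block) when the blocks are coupled through $-\hat{K}$. Because $\hat{M}$ is block upper-triangular with $\hat{W}$ Hurwitz and $0\notin\sigma(\hat{W})$, the generalized eigenspace of $\hat{M}$ at $0$ injects into that of $\widetilde{A}$ at $0$ via projection onto the first block of coordinates, so simplicity is inherited from Lemma \ref{lemma3}; making this argument airtight (e.g. via the explicit similarity transformation block-diagonalizing $\hat{M}$, which exists precisely because $\sigma(\widetilde{A})\cap\sigma(\hat{W})$ contains only $0$ and $0$ is not an eigenvalue of $\hat{W}$) is the one place where care is needed. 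Everything after that is linear-algebra bookkeeping parallel to the proof of Theorem \ref{theorem1}.
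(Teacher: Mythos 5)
Your proposal is correct and follows essentially the same route as the paper: exploit the block-triangular structure of $\hat{M}$ so that $\sigma(\hat{M})=\sigma(\widetilde{A})\cup\sigma(\hat{W})$ with $\hat{W}$ Hurwitz by Theorem \ref{theorem2}, identify the right and left zero-eigenvectors as $[\psi_1^{T},0]^{T}$ and $[\hat{\psi}_1^{T},\hat{\psi}_1^{T}\hat{K}\hat{W}^{-1}]$, and read off the limit exactly as in Theorem \ref{theorem1} with the initial condition shifted by $\hat{K}\hat{W}^{-1}e(0)$, which is precisely how $U_2$ is defined. Your explicit argument that the zero eigenvalue of $\hat{M}$ stays simple (any generalized eigenvector would force $\hat{W}w_2=0$, hence $w_2=0$, hence a generalized eigenvector of $\widetilde{A}$ at $0$, contradicting Lemma \ref{lemma3}) is a detail the paper leaves implicit, so you are if anything slightly more careful.
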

\begin{proof}
By Theorem \ref{theorem2}, we obtain that the system (\ref{eq26}) is asymptotically stable, which indicates that every eigenvalue of $\hat{W}$ has negative real part.
Denote the eigenvalue set of $\hat{M}$ as $\sigma(\hat{M})$, where $\hat{M}$ is defined in (\ref{eq40}). Then $\sigma(\hat{M})=\sigma(\widetilde{A})\cup \sigma(\hat{W})$. By the results on the eigenvalues of the matrices $\widetilde{A}$ and $\hat{W}$, we obtain that $\hat{M}$ has a zero eigenvalue, and all the other eigenvalues have negative real part. By the structure of $\hat{M}$ and results on eigenvectors of the matrices $\widetilde{A}$, we obtain that the right eigenvector of $\hat{M}$ subject to zero eigenvalue has the form of $[t_1^{T},0_{mn}^{T}]^{T}$, where $t_1$ is the right eigenvector of $\widetilde{A}$ subject to zero eigenvalue, and $0_{mn}$ is a zero vector of $mn$ dimension. The left eigenvector of $\hat{M}$ subject to zero eigenvalue has the form of $[\hat{\psi}_1^{T},\hat{\psi}_1^{T}\hat{K}\hat{W}^{-1}]$, where $\hat{\psi}_1^{T}$ is the left eigenvector of $\widetilde{A}$ subject to zero eigenvalue, and $\hat{K}$ and $\hat{W}$ are defined in (\ref{eq40}). By the way similar to Theorem \ref{theorem1}, we can prove that the limit $\lim_{t\rightarrow \infty} \bar{x}(t)$ exists. By Theorem \ref{theorem2}, we have $\lim_{t\rightarrow \infty} e_{j}(t)=0$, which indicates that $\lim_{t\rightarrow \infty} \hat{\bar{x}}_{j}(t)=\lim_{t\rightarrow \infty} \bar{x}(t)$. Furthermore, if $[\bar{x}^{T}(0),e^{T}(0)]^{T} \in U_{2}$, we can prove $x_{\infty} \in \mathcal{S}(\alpha)$ by a similar way to Theorem \ref{theorem1}. This completes the proof.    
\end{proof}

Theorem \ref{theorem3} indicates that the distributed control strategy (\ref{eq7}), (\ref{eq9}) and (\ref{eq10}) can drive the state of the system (\ref{eq4}) to the given pattern $\mathcal{S}(\alpha)$.

\section{Numerical examples}\label{section5}
In this section, we illustrate the effectiveness of the proposed centralized and distributed control strategies by numerical examples. Take the graph $\mathcal{G}=(\mathcal{V},\mathcal{E})$ as a $3\times 3$ grid graph and number vertexes of the grid graph from top to bottom and from left to right(see Fig.~\ref{fig5_1}). Then, the Laplacian matrix $L$ of graph $\mathcal{G}$ is obtained. Choose 
$$\alpha = \begin{bmatrix}
    1 & 1 & 1 & -1 & -1 & -1 & 1 & 1 & 1
\end{bmatrix}^{T}$$ 
so that the target pattern is shown in Figure \ref{fig5_2}, where blue squares represent  $1$ and  yellow squares represent $-1$.

By the method given in Section \ref{section3}, we can easily obtain the matrix $Q = D+\mathcal{A}^{(1)}(\alpha)-\mathcal{A}^{(2)}(\alpha)$. The control nodes are chosen as $\mathcal{V}_{l}= \{3,2,1,4,7,8,9\}$ and the corresponding control matrix is
\begin{equation}\nonumber
    B = [e^{(9)}_{3},e^{(9)}_{2},e^{(9)}_{1},e^{(9)}_{4},e^{(9)}_{7},e^{(9)}_{8},e^{(9)}_{9}].
\end{equation}
Set $a=4$ and $r=3$. It is clear that Assumptions \ref{assumption1}-\ref{assumption5} are satisfied for the given parameters. The initial condition of the system state and integrator state are chosen arbitrarily between -5 and 5, which are given by
\begin{equation}\nonumber
\begin{aligned}   
   & x(0)=\scriptsize{\begin{bmatrix}
        3.9  &  2.0  &  0.6 &  -3.2  & -2.9 &  -4.2  &  4.1 &   2.1  &  0.6
    \end{bmatrix}^{T}},\\
   & z(0)=\scriptsize{\begin{bmatrix}
        -1.9 &  -3.3 &   1.2  &  4.9  & -3.3 &  -2.4 &  -1.0
    \end{bmatrix}^{T}}.
\end{aligned}
\end{equation}
By calculating left and right eigenvectors of $\widetilde{A}$ subject to zero eigenvalue, we can verify that $\bar{x}(0)=[x(0)^{T},z(0)^{T}]^{T}\in U_{1}$.

\begin{figure}
\begin{center}
\subfigure[The figure of graph.]{
    \label{fig5_1}
    \includegraphics[width=0.40\linewidth]{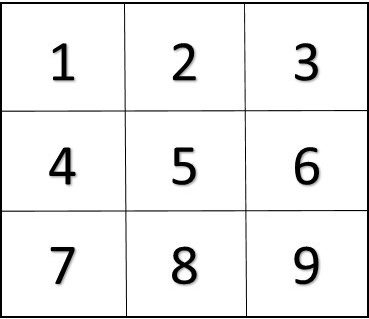}}
\subfigure[The figure of target pattern.]{
    \label{fig5_2}
    \includegraphics[width=0.46\linewidth]{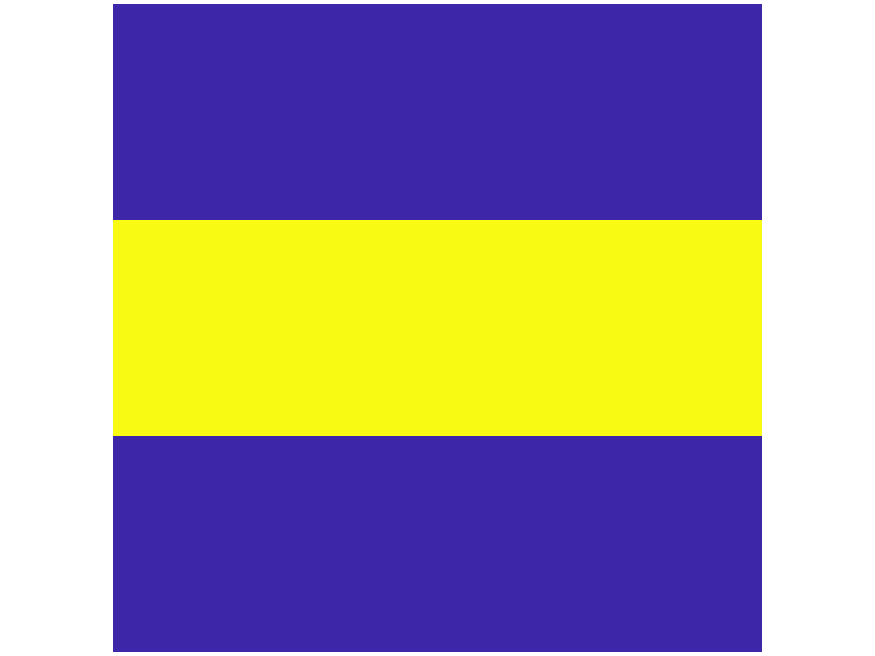}}
\caption{The figures of graph and the target pattern.}
\label{fig5}
\end{center}                                 
\end{figure}


\begin{figure}
\begin{center}
\includegraphics[width=8.0cm]{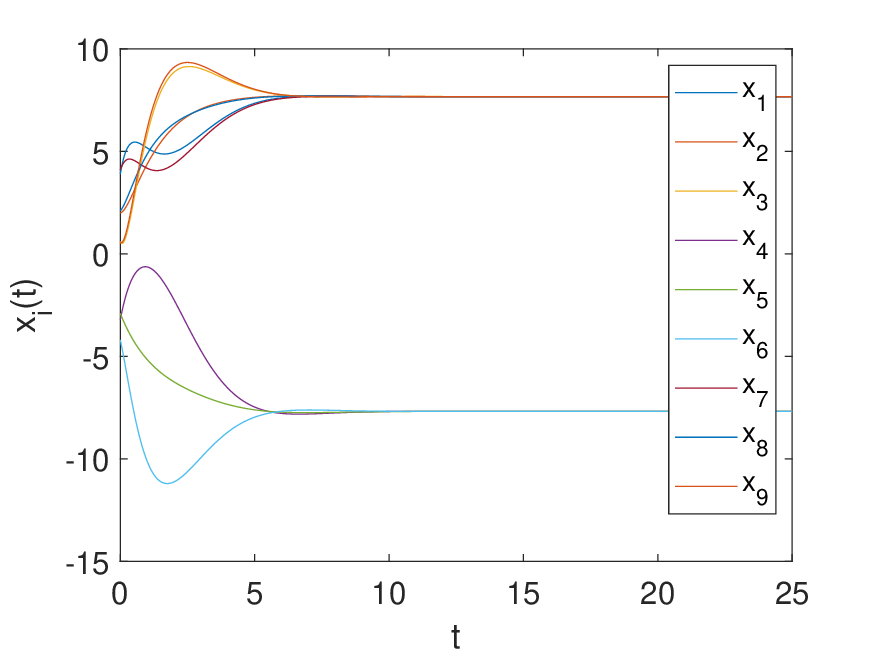} 
  \caption{The state of the system (\ref{eq1}) controlled by centralized control strategy.} 
  \label{fig2}
\end{center}                                 
\end{figure}

\begin{figure}
\begin{center}
\includegraphics[width=8.0cm]{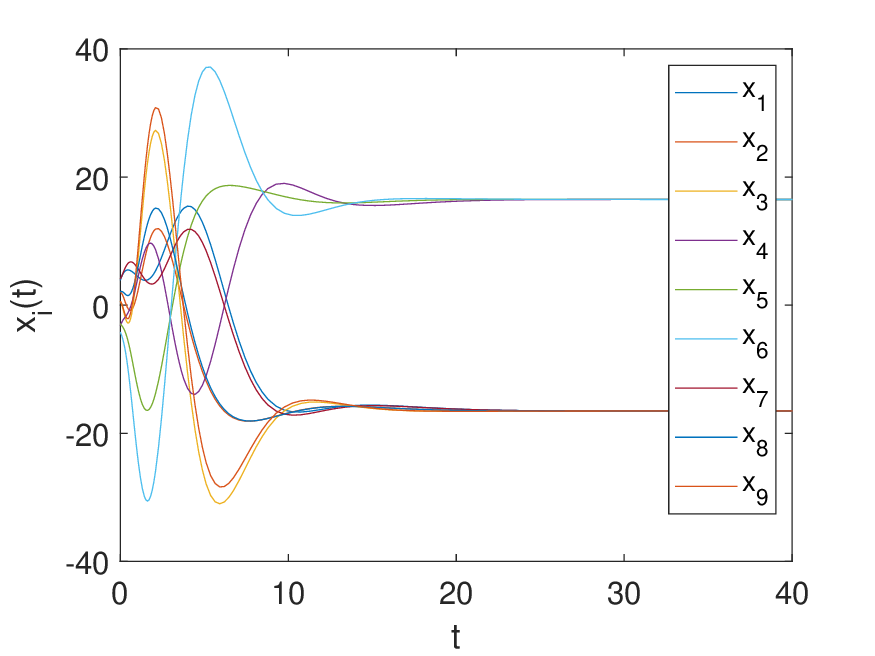} 
  \caption{The state of the system (\ref{eq1}) controlled by distributed control strategy.} 
  \label{fig3}
\end{center}                                 
\end{figure}

In this condition, the state of the system (\ref{eq1}) is shown in Fig.~\ref{fig2} when centralized control strategy (\ref{eq2}) and (\ref{eq6}) are used, in which the matrix $P$ is an arbitrarily chosen positive semi-definite solution of the Riccati equation (\ref{eq7}).
 In Fig.~\ref{fig2}, we see that when the time $t$ is large enough, we have
\begin{equation}\nonumber
    \begin{aligned}
        &x_{1}=x_{2}=x_{3}=-x_{4}=-x_{5}=-x_{6}\\
        &=x_{7}=x_{8}=x_{9}\approx 3.3,
    \end{aligned}
\end{equation}
which indicates that the state $x(t)$ converges to the given pattern $\mathcal{S}(\alpha)$.

For the scenario when the distributed strategy is used, the communication graph $\mathcal{G}_{l}$ of control nodes is a path graph with $7$ nodes, which implies that $\lambda_{2}(L_{1})=0.198$. Distributed state estimation is applied based on distributed observer (\ref{eq10}), where $N_{j}$, $F_{j}$, $T$ are obtained by the following way. $F_{j} = \hat{P}C_{j}^{T}$, where $\hat{P}$ is the only positive definite solution of the following algebraic Ricatti equation,
\begin{equation}\nonumber
    m\bar{A}\hat{P}+m\hat{P}\bar{A}^{T}-\hat{P}C^{T}C\hat{P}+I_{n+m}=0.
\end{equation}
$T$ is given by the solution of (\ref{eq11}), where $W=I_{n+m}$, $F=\hat{P}C^{T}$ and the matrix $C$ is defined in Theorem \ref{theorem2}. $N_{j}$ is given by (\ref{eq41}). Moreover, following (\ref{eq16}), $\chi$ is chosen as $3211$. The initial condition of estimation states $\hat{\bar{x}}_{j,0}(1\leq j\leq m)$ are chosen arbitrarily between $-5$ and $5$. Under these conditions, the state of the system (\ref{eq1}) is shown in Fig.~\ref{fig3} when distributed control strategy (\ref{eq7}), (\ref{eq9}) and (\ref{eq10}) are used. In Fig.~\ref{fig3}, we see that when the time $t$ is large enough, we have
\begin{equation}\nonumber
    \begin{aligned}
        &x_{1}=x_{2}=x_{3}=-x_{4}=-x_{5}=-x_{6}\\
        &=x_{7}=x_{8}=x_{9}\approx 15.0,
    \end{aligned}
\end{equation}
which indicates that the state $x(t)$ converges to the given pattern $\mathcal{S}(\alpha)$.

\section{Concluding remarks}

In this paper, we discuss a formation control problem of a Laplacian dynamic system. This system is a leader-follower multi-agent system with inherent agent interaction and partially controlled agents. To design a distributed control strategy for a formation control problem on systems of this type, a difficulty is that state information of followers which is not neighbor of any leader can not be used in the controller of any leader. To overcome this difficulty, we first design a control strategy to drive the state of the system to a given pattern by optimal control method. Then, when this control strategy is not distributed, by using a distributed observer to estimate the full state of the system at every leader agent, we implement the above control strategy in a distributed way. The combination of centralized control strategies and distributed observers indicates a general way for designing distributed control strategy.

\bibliographystyle{apalike}        
\bibliography{autosam}           

\end{document}